\documentclass[letterpaper, 10 pt,conference]{IEEEconf}

\IEEEoverridecommandlockouts

\makeatletter
\let\NAT@parse\undefined
\def\endthebibliography{%
	\def\@noitemerr{\@latex@warning{Empty `thebibliography' environment}}%
	\endlist
}
\makeatother
\newcommand{\cdconly}[1]{}
\newcommand{\arxivonly}[1]{#1}

\usepackage{graphics} 
\usepackage{graphicx}
\usepackage{booktabs}
\usepackage{epsfig} 
\usepackage{amsmath} 
\usepackage{amssymb}  
\usepackage{mathrsfs}  
\usepackage{pifont}
\usepackage{xfrac}
\usepackage{tikz} 
\usepackage{pgfplots}
\pgfplotsset{compat=1.16} 
\pgfplotsset{
	/pgfplots/area legend/.style={
		legend image code/.code={\draw [#1] (0.025,0) circle (0.02);
		},
	},
}
\usetikzlibrary{positioning}
\usetikzlibrary{calc}
\usetikzlibrary{decorations.markings,calc,patterns,decorations.pathmorphing}
\usetikzlibrary{arrows, shapes}
\tikzstyle{block} = [draw, thick, node distance=0.5cm, minimum width=1cm, inner sep=6pt]
\tikzstyle{sum} = [draw, thick, circle, node distance=1cm, inner sep=3.5pt, path picture={\draw[thin] (0.12,0) -- (-0.12,0);\draw[thin] (0,0.12) -- (0,-0.12);}]

\usepackage{arydshln}
\setlength{\dashlinedash}{2pt}
\setlength{\dashlinegap}{2pt}

\usepackage{algpseudocode}
\usepackage{algorithm}

\usepackage{lipsum}
\usepackage{verbatim}

\usepackage{amsthm}

\usepackage{mathtools}
\usepackage{epstopdf}

\usepackage[colorlinks]{hyperref}
\usepackage{color}
\definecolor{myblue}{rgb}{0.00000,0.44700,0.74100}%
\definecolor{myred}{rgb}{0.95000,0.200,0.200}%
\definecolor{mycolor3}{rgb}{0.92900,0.69400,0.12500}%
\definecolor{mygreen}{rgb}{0.1,0.7,0.15}%
\AtBeginDocument{%
\hypersetup{
	colorlinks   = true, 
	urlcolor     = myblue, 
	linkcolor    = myblue, 
	citecolor    = mygreen 
}
}

\newtheorem{assumption}{\bf Assumption}
\newtheorem{definition}{\bf Definition}
\newtheorem{example}{\bf Example}
\newtheorem{theorem}{\bf Theorem}

\newtheorem{lemma}{\bf Lemma}
\newtheorem{corollary}{\bf Corollary}
\newtheorem{remark}{\bf Remark}

\title{
Output-feedback model predictive control under dynamic uncertainties using integral quadratic constraints
}
\author{Lukas Schwenkel, Johannes K\"ohler, Matthias A. M\"uller, Frank Allg\"ower
\thanks{L. Schwenkel thanks the International Max Planck Research School for Intelligent Systems (IMPRS-IS) for supporting him. L. Schwenkel and F. Allg\"ower are with the University of Stuttgart, Institute for Systems Theory and Automatic Control, Germany.
(email: $\{$schwenkel, allgower$\}$@ist.uni-stuttgart.de). J. K\"ohler is with the ETH Zürich, Institute for Dynamical Systems and
Control, Switzerland. (email: jkoehle@ethz.ch). M.A. M\"uller is with the Leibniz University Hannover, Institute of Automatic Control, Germany. (email: mueller@irt.uni-hannover.de).}
}
\newcommand{\refeqleft}[2]{\overset{\makebox[0pt][c]{\scriptsize #1}}{#2}\hspace{\widthof{\scriptsize #1}/2-\widthof{$#2$}/2}}
\renewcommand{\refeq}[2]{\overset{\makebox[0pt][c]{\scriptsize #1}}{#2}}

\newcommand{\x}{x}
\newcommand{\p}{p}
\newcommand{\w}{w}
\renewcommand{\d}{d}
\newcommand{\dmax}{\gamma_\d}
\renewcommand{\u}{u}
\newcommand{\q}{q}
\newcommand{\z}{z}

\newcommand{\y}{y}
\newcommand{\filty}{s}
\renewcommand{\t}{t}
\renewcommand{\k}{k}

\newcommand{\A}[1]{A_{#1}}
\newcommand{\B}[2]{B_{#1}^{#2}}
\newcommand{\C}[2]{C_{#1}^{#2}}
\newcommand{\D}[3]{D_{#1}^{#2#3}}
\newcommand{\BB}[1]{B_{#1}}
\newcommand{\CC}[1]{C_{#1}}
\newcommand{\DD}[1]{D_{#1}}

\newcommand{\CK}{C_\K}
\newcommand{\DK}{D_\K}
\newcommand{\G}{G}                            

\newcommand{\K}{K}                            
\newcommand{\Kx}{\kappa}                      

\renewcommand{\L}{L}                          
\newcommand{\Lrho}{{L_\rho}}
\newcommand{\GK}{{\Theta}}                   
\newcommand{\GKrho}{{\GK_\rho}}

\newcommand{\filt}{\Psi}                      
\newcommand{\filtx}{\psi}                     
\newcommand{\all}{\Sigma}                     
\newcommand{\allrho}{{\Sigma_\rho}}



\newcommand{\obs}{\Xi}                        
\newcommand{\obsrho}{{\Xi_\rho}}
\newcommand{\gpobs}{\all}                     
\newcommand{\gpobsrho}{{\allrho}}

\newcommand{\Pobs}{\P^{\mathrm{o}}}
\newcommand{\Psynobs}{\Psyn^{\mathrm{o}}}

\newcommand{\Ysyntrafoobs}{{\Ysyntrafo^{\mathrm{o}}}}
\newcommand{\Zsynobs}{\mathcal{Z}^{\mathrm{o}}}

\newcommand{\Mobs}{\M^{\mathrm{o}}}
\newcommand{\muobs}{\mu^{\mathrm{o}}}
\newcommand{\gamobs}{\gamma^{\mathrm{o}}}
\newcommand\Psyn{\mathcal{\P}}

\newcommand\Ksynobs{\mathcal{K}^\mathrm{o}}
\newcommand\Lsynobs{\mathcal{L}^\mathrm{o}}
\newcommand\Msynobs{\mathcal{M}^\mathrm{o}}
\newcommand\Nsynobs{\mathcal{N}^\mathrm{o}}

\newcommand\Asynobs{\mathcal{A}^\mathrm{o}}
\newcommand\Bsynobs{\mathcal{B}^{\mathrm{o}}}
\newcommand\Csynobs{\mathcal{C}^{\mathrm{o}}}
\newcommand\Dsynobs{\mathcal{D}^{\mathrm{o}}}

\newcommand\Ysyntrafo{\mathcal{Y}}
\DeclareMathOperator{\diag}{diag}

\newcommand{\symb}[1]{\begin{pmatrix}\vphantom{#1}\star \end{pmatrix}^{\hspace{-0.15cm}\top}\hspace{-0.1cm} #1 }
\newcommand{\symbscalar}{(\star)^{\hspace{-0.05cm}\top}\hspace{-0.025cm} }

\newcommand{\diagmat}[1]{{\arraycolsep=2pt\begin{pmatrix} #1\end{pmatrix}}}

\newcommand{\nx}{{n_\x}}
\let\greeknu\nu
\renewcommand{\nu}{{n_\u}}
\newcommand{\nw}{{n_\w}}
\newcommand{\nd}{{n_\d}}
\newcommand{\nq}{{n_\q}}
\newcommand{\np}{{n_\p}}
\newcommand{\ny}{{n_\y}}
\newcommand{\nz}{{n_\z}}
\newcommand{\nK}{{n_\Kx}}

\newcommand{\nfilty}{{n_\filty}}

\newcommand{\nfilt}{{n_\filtx}}

\newcommand{\M}{M}
\newcommand{\X}{X}
\renewcommand{\P}{P}

\DeclareSymbolFont{newfont}{OML}{cmm}{m}{it}


\newcommand{\R}{\mathbb{R}}
\newcommand{\N}{\mathbb{N}}
\newcommand{\I}{\mathbb{I}}
\newcommand{\Kinf}{\mathscr{K}_\infty}
\newcommand{\KL}{\mathscr{K\!L}}
\renewcommand{\S}{\mathbb{S}}
\newcommand{\Hinf}{\mathcal{H}_\infty}

\newcommand{\lt}{\bar}
\newcommand{\blkmat}[2]{\left(\begin{array}{@{}#1@{}} #2 \end{array}\right)}
\newcommand{\newblkdash}[1][2.25ex]{\\[0.02cm] \hdashline\rule{0pt}{#1}}

\newcommand{\peak}{{\mathrm{peak}}}
\usepackage{stmaryrd}

\newcommand{\MXset}{\mathbb{\M\X}}

\newcommand{\GKx}{\theta}
\newcommand{\GKxe}{\ubar{\GKx}}

\newcommand{\eGKxe}{\delta\GKxe}
\newcommand{\initopt}{\greeknu}

\newcommand{\GKxnp}{\tilde{\GKx}}
\newcommand{\eGKxnp}{\delta\GKxnp}
\newcommand{\eGKxnpe}{\underline{\eGKxnp}}
\newcommand{\ynp}{\tilde{\y}}
\newcommand{\qnp}{\tilde{\q}}

\newcommand{\eynp}{\delta\tilde{{\y}}}
\newcommand{\eqnp}{\delta\tilde{{\q}}}
\newcommand{\xnf}{\hat{\x}}

\newcommand{\GKxnf}{\hat{\GKx}}
\newcommand{\eGKxnf}{\delta\hat{\GKx}}

\newcommand{\qnf}{\hat{\q}}
\newcommand{\znf}{\hat{\z}}
\newcommand{\unf}{\u^{\mathrm{MPC}}}
\newcommand{\exnf}{\delta{\hat{\x}}}

\newcommand{\eznf}{\delta{\hat{\z}}}
\newcommand{\eqnf}{\delta\hat{\q}}


\newcommand{\xe}{\ubar{\x}}
\newcommand{\exe}{{\delta\xe}} 
\newcommand{\Lx}{\lambda}   
\newcommand{\nL}{{n_\Lx}}
\newcommand{\zobs}{z^\mathrm{o}}
\newcommand{\wobs}{w^\mathrm{o}}
\newcommand{\barwobs}{\bar{w}^\mathrm{o}}
\newcommand{\barzobs}{\bar{z}^\mathrm{o}}
\newcommand{\cf}{\hat{c}}
\newcommand{\cp}{\tilde{c}}
\newcommand{\ce}{\ubar{c}}
\newcommand{\tcw}{\mathcal{S}}

\newcommand{\tswroot}{{\mathcal{T}}}
\newcommand{\scw}{\mathcal{Q}}

\usepackage{accents}
\newcommand{\ubar}[1]{\underaccent{\bar}{#1}}

\begin{document}

\maketitle

\begin{abstract}
	In this work, we propose an output-feedback tube-based model predictive control (MPC) scheme for linear systems under dynamic uncertainties that are described via integral quadratic constraints (IQC). 
	By leveraging IQCs, a large class of nonlinear and dynamic uncertainties can be addressed. 
	We leverage recent IQC synthesis tools to  design a dynamic controller and an estimator that are robust to these uncertainties and minimize the size of the resulting constraint tightening in the MPC.
	Thereby, we show that the robust estimation problem using IQCs with peak-to-peak performance can be convexified.
	We guarantee recursive feasibility, robust constraint satisfaction, and input-to-state stability of the resulting MPC scheme.
\end{abstract}

\section{Introduction}
Model predictive control (MPC) is a popular control strategy due to its ability to ensure constraint satisfaction~\cite{Rawlings2017}. 
The presence of dynamic uncertainties poses a major challenge in designing robust controllers, especially in ensuring constraint satisfaction, which we address in this paper. 
We utilize integral quadratic constraints (IQCs)~\cite{Megretski1997,Veenman2016,Scherer2022a} to describe a wide range of structured and unstructured uncertainties, e.g., $\ell_2$-gain bounded dynamic uncertainties, uncertain parameters or delays, and sector- or slope-restricted static nonlinearities (cf. IQC libraries in~\cite{Megretski1997,Veenman2016}).
Based on this IQC description, we construct a tube confining all possible trajectories (cf.~\cite{Rawlings2017,Chisci2001,Mayne2001}).
We minimize the size of this tube by designing an output-feedback controller using the peak-to-peak gain minimization procedure from~\cite{Schwenkel2025}.
To initialize the MPC predictions, we design a robust estimator with guaranteed error bounds.
Thereby, we show that the discrete-time robust estimation problem using IQCs for peak-to-peak performance can be reformulated as a single convex semi-definite program (SDP)\cdconly{\footnote{This result is only in the extended version~\cite{Schwenkel2025b} due to space limitations.}}, which is a contribution of independent interest. 
A similar reformulation of the continuous-time robust estimation problem using IQCs for $\Hinf$-performance is known from~\cite{Scherer2008}. 
We prove that the proposed MPC scheme is recursively feasible and that the resulting closed loop is input-to-state stable and robustly satisfies the constraints.
In our numerical example, we show that even in the special case of state measurement, the proposed methodology reduces conservatism compared to existing IQC-based MPC approaches~\cite{Schwenkel2020,Schwenkel2023c} as the IQC-filter states are estimated. 

\emph{Related work.} Dynamic uncertainties in tube-based MPC have been addressed by~\cite{Lovaas2008,Falugi2014}, where the tube is constructed based on a peak-to-peak gain bound on the uncertainty in combination with a rigid uniform bound on the peak of the output signal that enters the uncertainty. 
We avoid the use of a rigid uniform bound and instead capture how the peak of the error depends on the control input.
Thereby, the MPC can decide online whether to tighten or loosen the tube, resulting in significantly larger flexibility and reduced conservatism.
Such a flexible tube approach has been used in~\cite{Loehning2014} as well for the special case where the uncertainty results from using reduced order models.
In contrast to these works, the structure and nature of a variety of uncertainties can be described in a less conservative fashion by using IQCs~\cite{Megretski1997,Veenman2016,Scherer2022a}.
We presented a similar IQC based-approach in~\cite{Schwenkel2020} and extended it in~\cite{Schwenkel2023c} to include measurements in the prediction by an initial state optimization which improved the overall performance.
However, both schemes are limited to state measurements and do not provide a systematic offline design procedure of the stabilizing controller.
In the absence of dynamic uncertainties, classic output-feedback MPC designs use a tube based on the worst-case estimation error~\cite{Mayne2006}. 
In~\cite{Subramanian2017}, it was shown that the conservatism of output-feedback MPC can be reduced by including the knowledge from the previous prediction and error bound into the initial state optimization. 
In a similar spirit, we optimize the initial state by interpolating between the estimate and the previous prediction, as well as the estimation error bound and the previous prediction error bound, similar to the initial state interpolation schemes from~\cite{Schlueter2022} and~\cite{Koehler2022}.

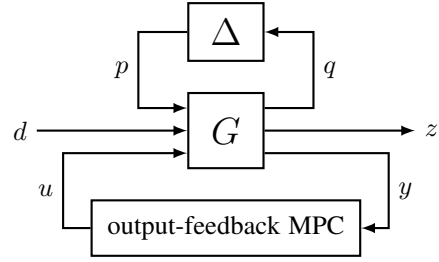
\begin{figure}[tb]\centering
	\begin{tikzpicture}
		\node [block] (Delta) {\Large$\Delta$};
		\draw[white] (Delta.west) + (-0.8,0) -- +(-1.2,0) node[left]{$\p$};
		\node [block,node distance=0.4cm, below=of Delta, minimum height=1cm, minimum width=1cm] (G) {\Large $\G$};
		\draw[latex-, thick] (G.west) -- +(-2,0)node [left] {$\d$};
		\node [block,node distance=0.4cm, below = of G] (K) {output-feedback MPC};
		\draw[-latex, thick] (G.east)+(0,-0.3)  --  +(1.65,-0.3) |-node [right,pos=0.25] {$\y$} (K);
		\draw[latex-, thick] (G.west)+(0,-0.3)  -- +(-1.65,-0.3)|- node [left,pos=0.25] {$\u$} (K);
		\draw[-latex, thick] (G.east) +(0,0) -- ($(G.east)+(2,0)$) node [right] {$\z$};
		\draw[latex-, thick] (Delta.east) -| node[right, pos=0.75] {$\q$} ($(G.east)+(0.65,0.3)$) -- ($(G.east)+(0,0.3)$);
		\draw[-latex, thick] (Delta.west) -| node[left, pos=0.75] {$\p$} ($(G.west)+(-0.65,0.3)$) -- ($(G.west)+(0,0.3)$);
	\end{tikzpicture}
	\caption{Interconnection of system $\G$, uncertainty $\Delta$, and MPC controller.\\[-2cm] }\label{fig:DGK}
\end{figure} 
\emph{Outline.} After introducing the problem setup in Section~\ref{sec:setup}, we perform a robust reachability analysis to construct the tube in Section~\ref{sec:tube}. 
The robust MPC scheme is proposed in Section~\ref{sec:rmpc} and recursive feasibility and stability are proven based on a certain assumption on the estimation error.
In Section~\ref{sec:estimator}, we show how to construct a robust estimator satisfying this assumption. 

\emph{Notation.} Vertically stacked vectors $x\in\R^n$, $y\in\R^m$ are denoted by $[x;\,y]\in\R^{n+m}$. The set of integers in the interval $[a,b]$ is denoted by $\I_{[a,b]}$. For $x\in\R^n$ and $a\in\R$ we write $x\leq a$ if $x_i\leq a$ for all $i\in\I_{[1,n]}$.
The set of symmetric matrices in $\P\in\R^{n\times n}$ with $\P=\P^\top$ is denoted by $\S^n$.	
For $A\in \R^{n\times m}$ and $P \in\R^{i\times j}$, denote $\diag(A,P)\triangleq (\begin{smallmatrix} A\\&P \end{smallmatrix})\triangleq (\begin{smallmatrix} A&0\\0&P \end{smallmatrix})\in\R^{(n+i)\times (m+j)}$ and $\symbscalar{PA}\triangleq A^\top P A$  if $\P\in \S^{n}$.
Further, we use $\star$ ($\bullet$) to denote symmetric (irrelevant) entries in block matrices, e.g., $(\begin{smallmatrix} I & A \\ \star & I\end{smallmatrix})=(\begin{smallmatrix} I & A \\ A^\top & I \end{smallmatrix})$, whereas $(\begin{smallmatrix} I & A \\ \bullet & I\end{smallmatrix})=(\begin{smallmatrix} I & A \\ B & I \end{smallmatrix})$ for some matrix $B$. 
The set of all signals $x:\N\to\R^n$ is denoted by $\ell_{2,\mathrm{e}}^n$. For $x\in\ell_{2,\mathrm{e}}^n$, define $\|x\|_\peak \triangleq \sup_{t\in\N} \|x_t\|_2$.
The class of continuous increasing functions $\alpha:[0,\infty)\to[0,\infty)$ with $\alpha(0)=0$ and $\lim_{x\to\infty}\alpha(x)=0$ is denoted by $\Kinf$. 
The class of functions $\beta:[0,\infty)\times\N \to [0,\infty)$ with $\beta(\cdot,t)\in\Kinf$ for all $t\in\N$, $\beta(x,\cdot)$ non-increasing and $\lim_{t\to\infty}\beta(x,t)=0$ for all $x\in[0,\infty)$ is denoted by $\KL$.

\section{Problem Setup} \label{sec:setup}
We consider the problem of designing an MPC controller for the interconnection of a known linear system $\G$ 
\begin{subequations}\label{eq:sys}
	\begin{align}
		\x_{t+1} &= \A\G \x_t + \B\G\p   \p_t + \B\G\d \d_t + \B\G\u \u_t \\
		\q_t &= \C\G\q \x_t  + \D\G\q\p \p_t + \D\G\q\d \d_t + \D\G\q\u \u_t\\
		\z_t &= \C\G\z \x_t  + \D\G\z\p \p_t + \D\G\z\d \d_t +  \D\G\z\u \u_t\\
		\y_t &= \C\G\y \x_t  + \D\G\y\p \p_t + \D\G\y\d \d_t
	\end{align}
	and an (possibly dynamic and nonlinear) uncertainty $\Delta$ 
	\begin{align}
		\p_t &= (\Delta(\q))_t
	\end{align}
\end{subequations}
as shown in Figure~\ref{fig:DGK}.
The time index is $t\in\N$, the state $\x_t\in\R^{\nx}$, the control input $\u_t\in\R^{\nu}$, and the uncertainty channel is described by $\p_t\in\R^{\np}$ and $\q_t\in\R^{\nq}$.
The system is subject to polytopic constraints 
\begin{align}\label{eq:constraints}
	\z_t \leq 1, \qquad \forall t\in\N,
\end{align}
where $\z_t\in\R^{\nz}$.
At time $t=0$ an initial estimate $\xnf_0$ for $\x_0$ is available.
Only the output $\y_t\in\R^{\ny}$ can be measured.
The signal $\d_t\in\R^{\nd}$ contains external disturbances and measurement noise, and while it is unknown, we assume that we know a peak bound $\|\d\|_\peak\leq \dmax$.
To describe the dynamic uncertainty $\p=\Delta(\q)$, we use finite horizon IQCs with a terminal cost (cf.~\cite{Scherer2018,Scherer2022a}) and the loop transformation (cf.~\cite{Hu2016}) defined by $T_{\rho^{-1}}: (\q_t)_{t\in\N} \mapsto (\rho^{-t}\q_t)_{t\in\N}$ for $\rho\in(0,1)$.
In particular, let $\lt \q \triangleq T_{\rho^{-1}}\q$, $\lt \p\triangleq T_{\rho^{-1}}\p$, $\Delta_\rho = T_{\rho^{-1}}\circ \Delta \circ T_{\rho}$, i.e., $\lt \p = \Delta_\rho(\lt\q)$. 
\begin{definition}\label{def:iqc}
	An operator $\Delta_\rho:\ell_{2,\mathrm{e}}^{\nq}\to\ell_{2,\mathrm{e}}^{\np}$ satisfies the finite horizon IQC with terminal cost defined by $(\A\filt,\B\filt\q,\B\filt\p,\C\filt\filty,\D\filt\filty\q,\D\filt\filty\p,\X,\M)$ iff for all $\lt\q \in \ell_{2,\mathrm{e}}^\nq$, $\lt\p = \Delta_\rho(\lt\q)$, and $t\in\N$ it holds that
	\begin{align}\label{eq:iqc}
		\sum_{k=0}^{t-1} \filty_{k}^\top \M \filty_k + \filtx_t^\top \X\filtx_t \geq 0
	\end{align}
	with $\filty_t\in\R^{\nfilty}$ and $\filtx_t\in\R^{\nfilt}$ being defined by $\filtx_0=0$ and
	\begin{subequations}\label{eq:filt_dyn}
		\begin{align}
			\filtx_{t+1} &= \A\filt \filtx_t + \B\filt\q \lt \q_t   + \B\filt\p \lt \p_t \\
			\filty_t &= \C\filt\filty \filtx_t + \D\filt\filty\q \lt \q_t + \D\filt\filty\p \lt \p_t.
		\end{align}
	\end{subequations}
\end{definition}
Definition~\ref{def:iqc} characterizes an uncertain operator $\Delta_\rho$ in terms of a known dynamical filter $\filt$, a known multiplier $\M$, and a known terminal cost $\X$ (cf.~\cite{Megretski1997,Veenman2016} and \cite{Scherer2022a}).
\begin{assumption}\label{ass:iqc}
	There exist $\rho\in(0,1)$, a set $\MXset\subseteq \mathbb{S}^{\nfilty}\times \mathbb{S}^{\nfilt}$, and a filter $(\A\filt,\B\filt\q,\B\filt\p,\C\filt\filty,\D\filt\filty\q,\D\filt\filty\p)$ such that $\Delta_\rho$ satisfies the finite horizon IQC with terminal cost defined by $(\A\filt,\B\filt\q,\B\filt\p,\C\filt\filty,\D\filt\filty\q,\D\filt\filty\p,\X,\M)$ for all $(\M,\X)\in\MXset$.
\end{assumption}
The control goal is to guarantee constraint satisfaction~\eqref{eq:constraints} and input-to-state stability from the disturbance input $\d$ to the state $\x$ for all possible disturbances $\|\d\|_\peak \leq \dmax$ and all uncertainties $\Delta$ satisfying Assumption~\ref{ass:iqc}.
We approach this problem by a tube-based MPC scheme, i.e., we confine all possible system trajectories in a sequence of sets called the tube, which we use in the MPC predictions for a robust planning to ensure constraint satisfaction. 

\section{Tube construction}\label{sec:tube}
In this section, we construct a tube that we can use for robust planning and we explain how we can minimize its size to reduce conservatism. 
As standard (cf.~\cite{Rawlings2017,Mayne2001,Chisci2001}), we include a stabilizing feedback $\K$ in the robust plan to ensure that the tube remains bounded.
Due to the output-feedback setting, we use a dynamic $\K$ of the form
\begin{subequations}\label{eq:prestab}
	\begin{align}\label{eq:Kx_sys}
		\Kx_{t+1} & = \A\K \Kx_{t} + \BB\K \y_{t}   \\
		\u_t &= \CC\K \Kx_{t} + \DD\K \y_{t}+\unf_t \label{eq:mpc_input}
	\end{align}	
\end{subequations}
where $\u_t$ is the input that is actually applied to the system~\eqref{eq:sys} and $\unf_t$ is computed by the MPC optimization problem. 
The controller $\K$ is initialized with $\Kx_0=0\in\R^{\nK}$.
The MPC scheme optimizes at each time $t$ a sequence of inputs $\unf_{k|t}$ for the next $N$ time points $k\in\I_{[t,t+N-1]}$.
The dynamics of the combined state $\GKx \triangleq [\x;\, \Kx]$ with system~\eqref{eq:sys}, controller~\eqref{eq:prestab}, and an input sequence $\unf_{k|t}$ can be compactly written as 
\begin{subequations}\label{eq:DGKx_sys}
	\begin{align}\label{eq:GKx_sys}
		\left[\GKx_{k+1|t};\, \q_{k|t};\, \z_{k|t};\, \y_{k|t}\right] &= \Theta \left[
			\GKx_{k|t};\, \p_{k|t};\, \d_k;\, \unf_{k|t} \right] \\
		\p_{k|t} &= (\Delta(\q_{\cdot|t}))_k \label{eq:pf}
	\end{align}	
\end{subequations}
where the matrix $\Theta$ is given in~\eqref{eq:GK} in the appendix and where $\GKx_{t|t} = \GKx_t$.
To initialize the dynamic uncertainty correctly, we set $\q_{k|t} \triangleq \q_k$ for $\k\in \I_{[0,t-1]}$.
In MPC, only the first input $\unf_t=\unf_{t|t}$ is applied. 
Since the output $\p_{k|t}$ of the uncertainty $\Delta$ and the disturbances $\d_k$ are unknown, we use the following nominal prediction model 
\begin{align}\label{eq:xnf_sys}
	\left[\GKxnf_{k+1|t};\, \qnf_{k|t};\, \znf_{k|t};\, \bullet \right] &=
	\Theta \left[\GKxnf_{k|t};\, 0;\, 0;\, \unf_{k|t}\right].
\end{align}
The prediction dynamics are free of uncertainties and hence, $\GKxnf_{k|t}$, $\qnf_{k|t}$, and $\znf_{k|t}$ can be computed at time $t$ for all $k\geq t$.
Define the error $\eGKxnf\triangleq \GKx-\GKxnf$, $\eqnf\triangleq \q-\qnf$, $\eznf\triangleq \z-\znf$, which satisfies
\begin{align}\label{eq:exnf_sys}
	\left[\eGKxnf_{k+1|t};\, \eqnf_{k|t};\, \eznf_{k|t};\, \bullet\right] &= \Theta \left[
		\eGKxnf_{k|t};\, \p_{k|t};\, \d_t;\, 0 \right]
\end{align}
where $\p_{k|t}$ follows~\eqref{eq:pf}.
Further, we denote the filter state $\filtx_{k|t}$ and output $\filty_{k|t}$ which result from~\eqref{eq:filt_dyn} initialized at $\filtx_{0|t}\triangleq 0$ and $\lt \p_{k|t}\triangleq \rho^{-k}\p_{k|t}$, $\lt \q_{k|t}\triangleq \rho^{-k}\q_{k|t}$, where $\p_{k|t} \triangleq \p_k$ for $\k\in \I_{[0,t-1]}$.
To satisfy~\eqref{eq:constraints} by using tightened constraints on the nominal prediction $\znf_{k|t}$, we need a bound on $\eznf_{k|t}$. 
For the analysis of~\eqref{eq:exnf_sys}, let $\w_{k|t} \triangleq \begin{bmatrix} \qnf_{k|t};\, \d_{k} \end{bmatrix}$, $\lt\w_{k|t} \triangleq \rho^{-k} \w_{k|t}$, $\overline{\eGKxnf}_{k|t}\triangleq \rho^{-k}  {\eGKxnf}_{k|t}$, and $\overline{\eznf}_{k|t}\triangleq \rho^{-k}  {\eznf}_{k|t}$, then the augmented system dynamics with state $\chi_{k|t} \triangleq \begin{bmatrix} \filtx_{k|t};\, \overline{\eGKxnf}_{k|t} \end{bmatrix}$ for $k\geq t$ is given by
\begin{subequations}\label{eq:aug_sys}
	\begin{align}
		\chi_{\k+1|t} &= \A\allrho \chi_{k|t} + \B\allrho\p \lt\p_{k|t} + \B\allrho\w \lt \w_{k|t} \\
		\filty_{k|t} &= \C\all\filty \chi_{k|t} + \D\all{\filty}\p \lt\p_{k|t} + \D\all{\filty}\w \lt\w_{k|t}\\
		\overline{\eznf}_{k|t} &= \C\all\z \chi_{k|t}  + \D\all\z\p \lt\p_{k|t} + \D\all\z\w \lt\w_{k|t},
	\end{align}
\end{subequations}
where all matrices are defined in~\eqref{eq:ABCDgpobs} in the appendix.
Additionally, define $\exnf_0\triangleq \x_0-x_0$, $\GKxnf_{0|-1}\triangleq[\xnf_0;\,0]$, $\eGKxnf_{0|-1}=[\exnf_0;\,0]$, and $\chi_{0|-1} = [0;\,\eGKxnf_{0|-1}]$.
We assume that the controller $\K$ satisfies the following matrix inequalities.
\begin{assumption}\label{ass:p2p_LMIs}
	There exist $(\M_1,\X_1)\in\MXset$, $(\M_2,\X_2)\in\MXset$, $\P\in\mathbb{S}^{\nx+\nK+\nfilt}$, and $\gamma\geq \mu\geq 0$ such that
	\begin{align}\label{eq:stab_LMI1}
		&\!\!\symb {\diagmat{-\P\\&\P\\&&\M\\&&&- \mu I}\hspace{-0.1cm}
			\begin{pmatrix}
				I &  0 & 0\\
				\A\allrho & \B\allrho\p & \B\allrho\w \\[0.5mm]
				\C\all\filty & \D\all{\filty}\p & \D\all{\filty}\w \\
				0& 0&I
		\end{pmatrix}}\prec 0\\
			&\!\!\resizebox{0.9\linewidth}{!}{$\symb{{\arraycolsep=1pt\begin{pmatrix}\ubar\X_1-\P\\
						&\ubar \X_2 \\
						&&\M_2\\
						&&&\frac {\alpha}{\gamma} I\\
						&&&&-\beta I\end{pmatrix}}\hspace{-0.1cm}
				\begin{pmatrix}I&0&0\\[0.3mm]
					\A\allrho & \B\allrho\p & \B\allrho\w \\[0.8mm]
					\C\all\filty&\D\all\filty\p&\D\all\filty\w\\[0.8mm]
					\C\all\z & \D\all\z\p & \D\all\z\w \\[0.5mm]
					0&0&I\end{pmatrix}}$}  \prec 0\nonumber\\[-0.5cm]
					\label{eq:ana_LMI2} \\
					\label{eq:stab_LMI2}
					&P-  \ubar\X_1-\ubar\X_2\succ 0,
	\end{align}
	where $\ubar \X_i = \diag(\X_i,0)$ for $i\in\{1,2\}$, $\M=\M_1+\M_2$, $\alpha = \frac{\rho^2}{1-\rho^2}$, and $\beta = \alpha (\gamma-\mu)$.	
\end{assumption}
In~\cite[Theorem~3]{Schwenkel2025} it is shown that~\eqref{eq:stab_LMI1},~\eqref{eq:ana_LMI2}, and~\eqref{eq:stab_LMI2} imply that the peak-to-peak gain from $\w$ to $\eznf$ is less than $\gamma$.
A controller $\K$ satisfying Assumption~\ref{ass:p2p_LMIs} and minimizing $\gamma$ can be designed using the algorithm in~\cite{Schwenkel2025}.
Given a bound on the initial estimation error, we can bound $\|\eznf_{k|t}\|_2^2$.
\begin{assumption}\label{ass:cf0}
	Let $\cf_{0|-1}\geq 0$ satisfy $\chi_{0|-1} ^\top \P \chi_{0|-1} \leq \cf_{0|-1}$.
\end{assumption}
\begin{theorem}\label{thm:constraint_tightening_no_opt}
	Let Assumptions~\ref{ass:iqc},~\ref{ass:p2p_LMIs}, and~\ref{ass:cf0} hold and let $\GKxnf_{t|t} = \GKxnf_{t|t-1}$.
	Then, for all $t\in\N$, $k\geq t$ we have
	\begin{align}\label{eq:constraint_tightening}
		\|\eznf_{k|t}\|_2^2 \leq {\textstyle \frac{\gamma}{\alpha}}\cf_{k|t} + {\textstyle \frac{\gamma}{\alpha}}\beta(\|\qnf_{k|t}\|_2^2+\dmax^2)
	\end{align}
	with $\cf_{t|t} = \cf_{t|t-1}$ and
	\begin{align}\label{eq:cf}
		\cf_{k+1|t} & = \rho^2 \cf_{k|t} + \mu \rho^2\big(\|\qnf_{k|t}\|_2^2 + \dmax^2 \big) .
	\end{align}
\end{theorem}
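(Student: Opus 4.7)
My plan is to introduce the augmented, $\rho$-weighted storage function
\[
\hat V_{k|t} \triangleq \rho^{2k}\chi_{k|t}^\top \P \chi_{k|t} + \rho^{2k}\sum_{j=0}^{k-1}\filty_{j|t}^\top \M \filty_{j|t},
\]
combining the quadratic storage from~\eqref{eq:stab_LMI1} with the cumulative IQC filter supply. The key reason for this construction is that the pointwise term $\filty_{k|t}^\top \M \filty_{k|t}$ produced by~\eqref{eq:stab_LMI1} is sign-indefinite, while its cumulative partial sum is lower bounded by the IQC (Assumption~\ref{ass:iqc}); folding this sum into the storage makes the estimate telescope cleanly and avoids having to use the IQC pointwise.

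First, I would left- and right-multiply~\eqref{eq:stab_LMI1} by $[\chi_{k|t};\,\lt\p_{k|t};\,\lt\w_{k|t}]$ and its transpose, obtaining the dissipation inequality $\chi_{k+1|t}^\top \P \chi_{k+1|t} - \chi_{k|t}^\top \P \chi_{k|t} + \filty_{k|t}^\top \M \filty_{k|t} \leq \mu \|\lt\w_{k|t}\|_2^2$. Scaling by $\rho^{2(k+1)}$ and using $\rho^{2(k+1)}\|\lt\w_{k|t}\|_2^2 = \rho^2\|\w_{k|t}\|_2^2 \leq \rho^2(\|\qnf_{k|t}\|_2^2+\dmax^2)$, the telescoping of the cumulative sum in $\hat V$ converts this into $\hat V_{k+1|t} \leq \rho^2 \hat V_{k|t} + \mu\rho^2(\|\qnf_{k|t}\|_2^2+\dmax^2)$ — exactly the defining recursion~\eqref{eq:cf} of $\cf_{k|t}$.

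Next, I would show $\hat V_{t|t}\leq \cf_{t|t}$ by induction on $t$. At $t=0$ the cumulative sum is empty and $\filtx_{0|0}=0$, while the hypothesis $\GKxnf_{0|0}=\GKxnf_{0|-1}$ gives $\chi_{0|0}=\chi_{0|-1}$, so Assumption~\ref{ass:cf0} yields $\hat V_{0|0}\leq \cf_{0|-1}=\cf_{0|0}$. For the inductive step, the filter is driven by the same actual signals up to time $t$ in the predictions at times $t$ and $t+1$, and $\GKxnf_{t+1|t+1}=\GKxnf_{t+1|t}$ forces the error components to agree, so $\chi_{t+1|t+1}=\chi_{t+1|t}$ and the cumulative filter sum is preserved; thus $\hat V_{t+1|t+1}=\hat V_{t+1|t}$, and combining with the recursion at $k=t$ and $\cf_{t+1|t+1}=\cf_{t+1|t}$ closes the induction. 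Iterating the recursion in $k$ then yields $\hat V_{k|t}\leq \cf_{k|t}$ for all $k\geq t$.

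Finally, the pointwise output bound comes from multiplying~\eqref{eq:ana_LMI2} by $[\chi_{k|t};\,\lt\p_{k|t};\,\lt\w_{k|t}]$ and using the two IQCs to substitute $-\chi_{k|t}^\top \ubar\X_1 \chi_{k|t} \leq \sum_{j=0}^{k-1}\filty_{j|t}^\top \M_1 \filty_{j|t}$ and $-\chi_{k+1|t}^\top \ubar\X_2 \chi_{k+1|t} \leq \sum_{j=0}^{k}\filty_{j|t}^\top \M_2 \filty_{j|t}$. Since $\M=\M_1+\M_2$, the pointwise $\filty_{k|t}^\top \M_2 \filty_{k|t}$ cancels, and after scaling by $\rho^{2k}$ one obtains $\tfrac{\alpha}{\gamma}\|\eznf_{k|t}\|_2^2 \leq \hat V_{k|t} + \beta\|\w_{k|t}\|_2^2 \leq \cf_{k|t}+\beta(\|\qnf_{k|t}\|_2^2+\dmax^2)$, which is~\eqref{eq:constraint_tightening}. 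The main obstacle is recognizing the right form of $\hat V$ so that the indefinite filter contributions telescope and so that the IQC is only used as a nonnegativity bound on the cumulative sum; once that is in place, the remainder is routine LMI manipulation.
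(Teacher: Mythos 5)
Your proposal is correct and follows essentially the same route as the paper's proof (given there for Theorem~2 with $\initopt_t=1$): the same multiplications of~\eqref{eq:stab_LMI1} and~\eqref{eq:ana_LMI2} by $[\chi_{k|t};\,\lt\p_{k|t};\,\lt\w_{k|t}]$, the same induction over $t$ using Assumption~\ref{ass:cf0} and $\filtx_{t|t}=\filtx_{t|t-1}$, and the same use of the IQC at horizons $k$ and $k+1$ for $(\M_1,\X_1)$ and $(\M_2,\X_2)$ with the cancellation of $\filty_{k|t}^\top\M_2\filty_{k|t}$. Your storage function $\hat V_{k|t}$ merely repackages the paper's separate bookkeeping of the telescoping sum, \eqref{eq:cf_sum}, and \eqref{eq:Pchi_bound} into one recursion matching~\eqref{eq:cf}.
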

The proof can be found in the Appendix~\ref{sec:proofs}.
Note that~\eqref{eq:constraint_tightening}--\eqref{eq:cf} imply $\limsup_{k\to\infty} \|\eznf_{k|t}\|_2^2\leq \gamma^2(\dmax^2+\|\qnf\|_\peak^2)$.
Hence, by designing $\K$ such that it minimizes $\gamma$, we minimize the tube size.
Due to the initialization $\GKxnf_{t|t} = \GKxnf_{t|t-1}$ in Theorem~\ref{thm:constraint_tightening_no_opt}, there is no feedback from the measurements $\y_t$ to the nominal trajectory $\GKxnf_{t|t}$, similar to~\cite{Mayne2001,Schwenkel2020}.
Next, we show how an initial condition $\GKxnf_{t|t}$ and a smaller error bound $\cf_{t|t}$ can be computed based on an estimate $\GKxe_t$ and a bound $\ce_t$ on the estimation error $\eGKxe_t$ and the IQC.
\begin{assumption}\label{ass:obs_bound}
	Let $\ubar \chi_t\!\triangleq\!\begin{bmatrix} \filtx_t; \rho^{-2t}\eGKxe_t \end{bmatrix}$. For all $t\in\N$ there exists a known bound $\ce_t\geq 0$ satisfying 
	\begin{align}\label{eq:eGKxe_bound}
		\rho^{-2t}\ce_{t}&\geq \ubar \chi_t^\top \P \ubar\chi_t + \sum_{j=0}^{t-1} \filty_j^\top \M \filty_j.
	\end{align}
\end{assumption}
In Section~\ref{sec:estimator}, we show how to design a robust estimator  and a bound $\ce_t$ that satisfy Assumption~\ref{ass:obs_bound}.
Using this additional information, we initialize 
\begin{subequations}\label{eq:init_opt}
	\begin{align}
		\GKxnf_{t|t}&=\initopt_t \GKxnf_{t|t-1} + (1-\initopt_t) \GKxe_t\\
		\cf_{t|t}  & = \initopt_t \cf_{t|t-1} + (1-\initopt_t) \ce_t
	\end{align}	
\end{subequations}
where  $\initopt_t\in[0,1]$ is a decision variable of the MPC scheme that interpolates between the prediction and the estimate, similar to~\cite{Schlueter2022,Koehler2022}. 
\begin{theorem}\label{thm:constraint_tightening}
	Let Assumptions~\ref{ass:iqc},~\ref{ass:p2p_LMIs},~\ref{ass:cf0}, and~\ref{ass:obs_bound} hold. 
	Then, for all $t\in \N$, $k\geq t$, and $\initopt_t\in[0,1]$ we have \eqref{eq:constraint_tightening} with \eqref{eq:cf},~\eqref{eq:init_opt}.
\end{theorem}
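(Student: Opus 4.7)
The plan is to reduce Theorem~\ref{thm:constraint_tightening} to Theorem~\ref{thm:constraint_tightening_no_opt} by observing that the proof of the latter uses the initial data only through the single compatibility inequality
\begin{equation*}
    \chi_{t|t}^\top \P\, \chi_{t|t} + \sum_{j=0}^{t-1} \filty_j^\top \M \filty_j \leq \rho^{-2t} \cf_{t|t},
\end{equation*}
which, combined with the dissipation LMI~\eqref{eq:stab_LMI1} and the IQC~\eqref{eq:iqc}, propagates~\eqref{eq:constraint_tightening}--\eqref{eq:cf} forward for all $k\geq t$. Once this inequality is verified for the interpolated pair defined by~\eqref{eq:init_opt}, the rest of the argument is verbatim that of Theorem~\ref{thm:constraint_tightening_no_opt}.

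First I would check the inequality at the two endpoints of the interpolation. For $\initopt_t=1$ the scheme reduces to $\GKxnf_{t|t}=\GKxnf_{t|t-1}$ with $\cf_{t|t}=\cf_{t|t-1}$, so the inequality is the invariant already carried by Theorem~\ref{thm:constraint_tightening_no_opt} applied at the previous step. For $\initopt_t=0$ one has $\eGKxnf_{t|t}=\eGKxe_t$, whence $\chi_{t|t}$ coincides with the vector $\ubar\chi_t$ from Assumption~\ref{ass:obs_bound}, and the inequality with $\cf_{t|t}=\ce_t$ is exactly~\eqref{eq:eGKxe_bound}.

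For $\initopt_t\in(0,1)$ the key observation is that the interpolation~\eqref{eq:init_opt} of $\GKxnf_{t|t}$ translates at the level of $\chi$ into the convex combination $\chi_{t|t}=\initopt_t\,\chi_{t|t-1}+(1-\initopt_t)\,\ubar\chi_t$, since the filter state $\filtx_t$ is common to both vectors and the error component interpolates linearly. Assuming $\P\succeq 0$, which follows from~\eqref{eq:stab_LMI2} once one notes that the terminal-cost matrices $\X_i$ from the IQC classes considered are positive semidefinite, the quadratic form $\chi\mapsto\chi^\top\P\chi$ is convex, yielding
\begin{equation*}
    \chi_{t|t}^\top\P\,\chi_{t|t}\leq\initopt_t\,\chi_{t|t-1}^\top\P\,\chi_{t|t-1}+(1-\initopt_t)\,\ubar\chi_t^\top\P\,\ubar\chi_t.
\end{equation*}
Adding the common IQC sum to both sides and combining with the two endpoint bounds together with the affine update for $\cf_{t|t}$ in~\eqref{eq:init_opt} gives the desired compatibility.

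The main obstacle I anticipate is this convexity step, which rests on $\P\succeq 0$. This is not stated as an explicit hypothesis and must be read off from Assumption~\ref{ass:p2p_LMIs} together with semidefiniteness of the terminal-cost matrices appearing in the admissible set $\MXset$; if indefinite $\X_i$ were allowed, convexity of $V(\chi)=\chi^\top\P\chi$ could fail and a finer argument would be needed. A secondary subtlety is consistent bookkeeping of the $\rho$-scaling between $\ubar\chi_t$ in Assumption~\ref{ass:obs_bound} and $\chi_{t|t-1}$ in Section~\ref{sec:tube}, so that the two vectors live in the same coordinates for the convex combination to be meaningful.
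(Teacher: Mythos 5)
Your overall route is the paper's: establish the weighted invariant $\chi_{t|t}^\top \P \chi_{t|t}+\sum_{j=0}^{t-1}\filty_j^\top\M\filty_j\le\rho^{-2t}\cf_{t|t}$ for the interpolated initialization~\eqref{eq:init_opt} and then run the unchanged dissipation/IQC/telescoping argument (the paper proves Theorem~\ref{thm:constraint_tightening} directly and obtains Theorem~\ref{thm:constraint_tightening_no_opt} as the special case $\initopt_t=1$), and your observation that $\chi_{t|t}$ is a convex combination of $\chi_{t|t-1}$ and $\ubar\chi_t$ because the filter state $\filtx_t$ is shared is the right structural fact. The genuine gap is how you justify the convexity inequality: you assume $\P\succeq 0$, arguing it follows from~\eqref{eq:stab_LMI2} "once one notes that the terminal-cost matrices $\X_i$ are positive semidefinite." That premise is not available: $\MXset\subseteq\S^{\nfilty}\times\S^{\nfilt}$ carries no sign constraint, and for the finite-horizon IQCs with terminal cost targeted here $\X$ is in general indefinite, so neither $\P\succeq0$ nor convexity of $\chi\mapsto\chi^\top\P\chi$ can be concluded — you flag this yourself as the main obstacle, but it is precisely the step that needs to be closed, so as written the central inequality is unproven.

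The paper closes it using exactly the structure you already identified: since $\filtx_t$ is common to both points, the interpolation acts only on the error block, so one only needs convexity of the partial map $\GKx\mapsto g(\GKx)=\filtx_t^\top\P_{11}\filtx_t+2\filtx_t^\top\P_{12}\GKx+\GKx^\top\P_{22}\GKx$ for fixed $\filtx_t$, i.e., only $\P_{22}\succeq0$. And $\P_{22}\succ0$ follows unconditionally from~\eqref{eq:stab_LMI2}: because $\ubar\X_i=\diag(\X_i,0)$ has a zero lower-right block, the $(2,2)$ block of $\P-\ubar\X_1-\ubar\X_2$ is exactly $\P_{22}$, and positive definiteness of the full matrix implies positive definiteness of this principal block regardless of the sign of $\X_1,\X_2$ or of $\P_{11}$. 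This yields the paper's inequality~\eqref{eq:Pchi_conv}, after which the induction for~\eqref{eq:Pchi_bound} and the telescoping argument proceed as you describe. Two minor points: the $\initopt_t=1$ endpoint should not be phrased as "Theorem~\ref{thm:constraint_tightening_no_opt} applied at the previous step," since earlier steps may have used $\initopt\neq1$; the invariant must be propagated by induction for the interpolated scheme itself, which your convex-combination step in fact provides. And regarding the $\rho$-scaling you worry about, the quantity used in the proofs is $\rho^{-t}\eGKxe_t$ in the error block of $\ubar\chi_t$ (consistent with $\zobs_t=\rho^t\ubar\chi_t$ in Section~\ref{sec:estimator}), so the coordinates do match those of $\chi_{t|t-1}$.
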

The proof can be found in the Appendix~\ref{sec:proofs}.

\section{Robust model predictive control scheme}\label{sec:rmpc}
In this section, we define the MPC optimization problem, which exploits the bound~\eqref{eq:constraint_tightening} on the prediction error to tighten the constraints accordingly.
At each time step $t$, given $\GKxe_t$, $\ce_t$, $\GKxnf_{t|t-1}$, and $\cf_{t|t-1}$, we compute $\initopt_t$ and $\unf_{\cdot | t}$ by solving the following optimization problem
\begin{subequations}\label{eq:mpc}
	\begin{align}
		&\min_{\unf_{\cdot|t},\initopt_t} \ \!\! \sum_{k=t}^{t+N-1} \left\| \begin{pmatrix} \GKxnf_{k|t}  \\ \unf_{k|t} \end{pmatrix} \right\|_\scw^2\!\! + \|\GKxnf_{t+N|t}\|_\tcw^2 \triangleq J(\GKxnf_{t|t},\unf_{\cdot|t}) \\
		&\text{s.t.}\  \text{nominal and tube dynamics~\eqref{eq:xnf_sys},~\eqref{eq:cf},~\eqref{eq:init_opt}} \\
		&\ \ \znf_{k|t} \leq 1 -\sqrt{\frac{\gamma}{\alpha}\cf_{k|t} + \frac{\gamma}{\alpha}\beta (\|\qnf_{k|t}\|_2^2+\dmax^2)} \ \forall k \in \I_{[t,t+N-1]}\label{eq:mpc:constraint} \\
		&\ \ \|\tswroot\GKxnf_{t+N|t}\|_{2}^2 \leq \GKxnf^\mathrm{f},\quad \cf_{t+N|t} \leq \cf^\mathrm{f}\label{eq:mpc:term}
	\end{align}	
\end{subequations}
where the cost weighting matrix $\scw\succ 0$ is a design parameter that can be tuned to achieve secondary performance goals beyond stability and constraint satisfaction.
The constraint~\eqref{eq:mpc:constraint} corresponds to $\znf_{k|t}\leq 1-\eznf_{k|t}$ with the upper bound~\eqref{eq:constraint_tightening}.
Stability and constraint satisfaction are ensured by a suitable choice of the terminal cost matrix $\tcw$ and the terminal set~\eqref{eq:mpc:term} defined by $\tswroot$, $\GKxnf^\mathrm{f}$, $\cf^\mathrm{f}$. 
\begin{assumption}\label{ass:terminal}
	Let $\tswroot^\top \tswroot \succ 0 $, $\tcw\succ 0$, $\GKxnf^\mathrm{f}\geq 0$, and $\cf^\mathrm{f}\geq 0$ satisfy
	\begin{subequations}
	\begin{align}
		&\A\GK^\top \tcw \A\GK - \tcw \preceq -\begin{pmatrix}I & 0\end{pmatrix} \scw \begin{pmatrix}I&0\end{pmatrix}^\top  \label{eq:ass_tcw} \\
		&\A\GK^\top \tswroot^\top \tswroot \A\GK - \tswroot^\top \tswroot \preceq 0 \label{eq:ts_inv1}\\
		&\cf^\mathrm{f} (1-\rho^2)\geq  \mu\rho^2 (\gamma_\w^\mathrm{f})^2 \label{eq:ts_inv2} \\
		&\|\C\GK\z\tswroot^{-1} \|_{2\shortrightarrow\infty} \sqrt{\GKxnf^\mathrm{f}} \leq 1 - \sqrt{\frac{\gamma}{\alpha} \cf^\mathrm{f}+\frac{\gamma}{\alpha}\beta(\gamma_\w^\mathrm{f})^2} \label{eq:ts_feas}
	\end{align}
	where $(\gamma_\w^\mathrm{f})^2\triangleq \|\C\GK\q\tswroot^{-1}\|_2^2\GKxnf^\mathrm{f} + \dmax^2$ and the induced $2$-to-$\infty$ matrix norm is defined by $\|A\|_{2\shortrightarrow\infty} \triangleq \max_{\|x\|_2= 1 } \|Ax\|_\infty$.\footnote{Can be computed via $\|A\|_{2\shortrightarrow\infty} = \max_{i} \|e_i^\top A \|_2$, where $e_i\in\R^\nz$ is the $i$-th unit vector, i.e., $I_\nz = \begin{pmatrix}e_1&\dots&e_\nz\end{pmatrix}$.}
	\end{subequations}
\end{assumption}
Inequalities~\eqref{eq:ts_inv1},~\eqref{eq:ts_inv2} correspond to invariance of the terminal set, while~\eqref{eq:ts_feas} is needed for feasibility of the terminal set. 
The inequality~\eqref{eq:ass_tcw} is stating that the terminal cost is an upper bound on the cost-to-go.
The following Lemma provides a sufficient condition to satisfy these inequalities.
\begin{lemma}\label{lem:ts}
	Suppose that $\gamma\dmax \leq 1$. Then there exist $\tcw$, $\tswroot$, $\cf^\mathrm{f}$, and $\GKxnf^\mathrm{f}$ such that Assumption~\ref{ass:terminal} holds.
\end{lemma}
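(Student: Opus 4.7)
The strategy is to construct $\tcw$ and $\tswroot$ via standard discrete Lyapunov equations, and then to pick $\GKxnf^\mathrm{f}$ and $\cf^\mathrm{f}$ so that the feasibility inequality~\eqref{eq:ts_feas} collapses precisely to the hypothesis $\gamma\dmax\leq 1$.

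First, I would observe that the nominal closed-loop matrix $\A\GK$ is Schur stable. Since the nominal plant-controller dynamics of $\GKx=[\x;\,\Kx]$ do not depend on the filter state, the system matrix $\A\allrho$ of the augmented system~\eqref{eq:aug_sys} is block-triangular with $\rho^{-1}\A\GK$ on one diagonal block (cf.~\eqref{eq:ABCDgpobs}). As Assumption~\ref{ass:p2p_LMIs} certifies a finite peak-to-peak gain for that augmented system (see the argument underlying Theorem~\ref{thm:constraint_tightening_no_opt}), $\A\allrho$ must be Schur, and hence $\rho(\A\GK)<\rho<1$. Standard Lyapunov theory then yields a unique $\tcw\succ 0$ solving $\A\GK^\top \tcw \A\GK-\tcw = -\begin{pmatrix}I&0\end{pmatrix}\scw\begin{pmatrix}I&0\end{pmatrix}^\top$ (the right-hand side is strictly negative definite because $\begin{pmatrix}I&0\end{pmatrix}$ has full row rank and $\scw\succ 0$), which satisfies~\eqref{eq:ass_tcw} with equality, as well as a unique $M\succ 0$ solving $\A\GK^\top M \A\GK - M = -I$, from which a Cholesky factor $\tswroot$ gives $\tswroot^\top\tswroot\succ 0$ and~\eqref{eq:ts_inv1}.

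With $\tswroot$ fixed, set $a\triangleq\|\C\GK\z\tswroot^{-1}\|_{2\shortrightarrow\infty}$ and $b\triangleq\|\C\GK\q\tswroot^{-1}\|_2$. I would saturate~\eqref{eq:ts_inv2} by choosing $\cf^\mathrm{f}=\mu\alpha(\gamma_\w^\mathrm{f})^2$ with $(\gamma_\w^\mathrm{f})^2=b^2\GKxnf^\mathrm{f}+\dmax^2$. Using $\beta=\alpha(\gamma-\mu)$, the argument of the square root in~\eqref{eq:ts_feas} simplifies as $\tfrac{\gamma}{\alpha}\cf^\mathrm{f}+\tfrac{\gamma\beta}{\alpha}(\gamma_\w^\mathrm{f})^2=\gamma\mu(\gamma_\w^\mathrm{f})^2+\gamma(\gamma-\mu)(\gamma_\w^\mathrm{f})^2=\gamma^2(\gamma_\w^\mathrm{f})^2$, so~\eqref{eq:ts_feas} reduces to $a\sqrt{\GKxnf^\mathrm{f}}+\gamma\sqrt{b^2\GKxnf^\mathrm{f}+\dmax^2}\leq 1$. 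At $\GKxnf^\mathrm{f}=0$ the left-hand side equals $\gamma\dmax\leq 1$, so by continuity any sufficiently small $\GKxnf^\mathrm{f}\geq 0$ satisfies the inequality (with $\GKxnf^\mathrm{f}=0$ still admissible in the boundary case $\gamma\dmax=1$).

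The main obstacle is extracting Schur stability of $\A\GK$ cleanly from the IQC-based LMI~\eqref{eq:stab_LMI1}, since the multiplier in the analysis inequality is generally indefinite; once the block-triangular structure of $\A\allrho$ is exploited, the remaining construction is a routine combination of discrete Lyapunov theory with a one-dimensional continuity argument in $\GKxnf^\mathrm{f}$, tailored so that the critical value of~\eqref{eq:ts_feas} at $\GKxnf^\mathrm{f}=0$ matches the hypothesis $\gamma\dmax\leq 1$ exactly.
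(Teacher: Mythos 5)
Your construction is essentially the paper's: choose $\tcw\succ0$ from the discrete Lyapunov equation $\A\GK^\top \tcw\A\GK-\tcw=-\begin{pmatrix}I&0\end{pmatrix}\scw\begin{pmatrix}I&0\end{pmatrix}^\top$, pick $\tswroot$ from a Lyapunov-type inequality so that~\eqref{eq:ts_inv1} holds (the paper simply takes $\tswroot^\top\tswroot=\tcw$, you solve a second Lyapunov equation --- both are fine), saturate~\eqref{eq:ts_inv2} via $\cf^\mathrm{f}=\alpha\mu(\gamma_\w^\mathrm{f})^2$, and observe that the square-root term in~\eqref{eq:ts_feas} collapses to $\gamma^2(\gamma_\w^\mathrm{f})^2$, so that at $\GKxnf^\mathrm{f}=0$ feasibility is exactly $\gamma\dmax\leq1$; the paper stops at $\GKxnf^\mathrm{f}=0$, while your continuity argument additionally admits a small nonzero $\GKxnf^\mathrm{f}$ when $\gamma\dmax<1$, which anticipates the enlargement in Remark~\ref{rem:ts}. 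The algebra $\frac{\gamma}{\alpha}\cf^\mathrm{f}+\frac{\gamma}{\alpha}\beta(\gamma_\w^\mathrm{f})^2=\gamma^2(\gamma_\w^\mathrm{f})^2$ and the verification of~\eqref{eq:ass_tcw}--\eqref{eq:ts_inv2} are correct.

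The one weak link is your justification of Schur stability of $\A\GK$: the inference ``Assumption~\ref{ass:p2p_LMIs} certifies a finite peak-to-peak gain, hence $\A\allrho$ is Schur'' is not valid in general, since an input--output gain bound does not by itself rule out unstable (unobservable/uncontrollable) internal modes. The block-triangular structure of $\A\allrho$ you point out does give $\A\GK$ Schur \emph{once} $\A\allrho$ is known to be Schur, but that premise needs the full IQC stability argument: summing~\eqref{eq:stab_LMI1} along trajectories, lower-bounding the accumulated multiplier terms via the IQC, and using the positivity condition~\eqref{eq:stab_LMI2} --- this is precisely what \cite{Schwenkel2025} establishes and what the paper cites for the statement ``$\A\GK$ is Schur stable''. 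So replace your heuristic gain argument (which you yourself flag as the main obstacle) by this citation, and the proof is complete.
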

\begin{proof}
	As $\A\GK$ is Schur stable (follows from~\eqref{eq:stab_LMI1},~\eqref{eq:stab_LMI2} as shown in~\cite{Schwenkel2025}), we can choose $\tcw\succ 0$ satisfying~\eqref{eq:ass_tcw} as the unique solution of the Lyapunov equation $\A\GK^\top \tcw \A\GK - \tcw = -\begin{pmatrix}I & 0\end{pmatrix} \scw \begin{pmatrix}I&0\end{pmatrix}^\top\prec0$. 
	Condition~\eqref{eq:ts_inv1} follows from~\eqref{eq:ass_tcw} when choosing $\tswroot$ as $\tswroot^\top \tswroot=\tcw$.
	Further, set $\GKxnf^\mathrm{f} = 0$ and $\cf^\mathrm{f} = \frac{\mu \rho^{2}}{1-\rho^2} \dmax^2= \alpha \mu \dmax^2$.
	Then,~\eqref{eq:ts_inv2} holds and due to $\gamma\dmax \leq 1$ we have $1 - \sqrt{\gamma^2\dmax^2} \geq 0$ which yields~\eqref{eq:ts_feas}.
\end{proof}
The condition $\gamma\dmax\leq 1$ ensures that the origin lies in the tightened constraint set, which imposes a maximal bound on the disturbance bound $\dmax$.
Next, we provide closed-loop guarantees for the proposed MPC scheme.
\begin{theorem}\label{thm:iss}
	Let Assumptions~\ref{ass:iqc},~\ref{ass:p2p_LMIs},~\ref{ass:cf0},~\ref{ass:obs_bound} and~\ref{ass:terminal} hold.
	Then, the MPC problem~\eqref{eq:mpc} is recursively feasible, i.e., if~\eqref{eq:mpc} is feasible at time $t=0$ exists, then it is feasible for all $\t\in\N$.
	Further, the closed loop robustly satisfies the constraints~\eqref{eq:constraints} and is input-to-state stable, i.e., there exist $\beta_1\in\KL$ and $\alpha_1\in\Kinf$ such that for all $t\in\N$ we have
	\begin{align}\label{eq:iss}
		\| \GKx_t \|^2_2 &\leq \beta_1\left(\left\|\x_0 \right\|_2+\left\|\exnf_0\right\|_2,t\right) + \alpha_1(\|\d\|_\peak).
	\end{align}
\end{theorem}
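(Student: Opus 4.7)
\textbf{Proof plan for Theorem~\ref{thm:iss}.} The proof follows the standard three-step argument for tube-based MPC (recursive feasibility, robust constraint satisfaction, closed-loop stability), adapted to handle the IQC-based tube and the initial-state interpolation $\initopt_t$.

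\emph{Step 1 (Recursive feasibility).} Given an optimizer $(\unf^*_{\cdot|t}, \initopt^*_t)$ at time $t$, I would construct a candidate solution at time $t+1$ by shifting and appending the terminal control: $\unf_{k|t+1} = \unf^*_{k|t}$ for $k \in \I_{[t+1,t+N-1]}$ and $\unf_{t+N|t+1} = 0$, together with $\initopt_{t+1} = 1$. The choice $\initopt_{t+1}=1$ in~\eqref{eq:init_opt} yields $\GKxnf_{t+1|t+1} = \GKxnf^*_{t+1|t}$ and $\cf_{t+1|t+1} = \cf^*_{t+1|t}$, so by induction $\GKxnf_{k|t+1} = \GKxnf^*_{k|t}$ and $\cf_{k|t+1}=\cf^*_{k|t}$ for $k \in \I_{[t+1,t+N]}$; hence the tightened constraint~\eqref{eq:mpc:constraint} and the terminal constraints~\eqref{eq:mpc:term} are inherited from the optimizer for $k\le t+N-1$. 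The only new conditions to verify concern the appended terminal step. Using $\unf_{t+N|t+1}=0$, I would show: (i) the state constraint $\|\tswroot \GKxnf_{t+N+1|t+1}\|^2_2 \le \GKxnf^\mathrm{f}$ follows from~\eqref{eq:ts_inv1}; (ii) the tube constraint $\cf_{t+N+1|t+1} \le \cf^\mathrm{f}$ follows from~\eqref{eq:cf} and~\eqref{eq:ts_inv2}, after bounding $\|\qnf_{t+N|t+1}\|_2^2 \le \|\C\GK^\q \tswroot^{-1}\|_2^2\GKxnf^\mathrm{f}$; (iii) the tightened constraint~\eqref{eq:mpc:constraint} at $k=t+N$ follows from~\eqref{eq:ts_feas} by bounding $\|\znf_{t+N|t+1}\|_\infty$ using the $2$-to-$\infty$ norm.

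\emph{Step 2 (Constraint satisfaction).} For any feasible $t$, the MPC constraint~\eqref{eq:mpc:constraint} at $k=t$ together with the IQC-based error bound from Theorem~\ref{thm:constraint_tightening} (applied componentwise using $|\eznf_{t|t,i}| \le \|\eznf_{t|t}\|_2$) directly yield $\z_{t,i} = \znf_{t|t,i} + \eznf_{t|t,i} \le 1$, so~\eqref{eq:constraints} holds robustly.

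\emph{Step 3 (ISS).} I would take the optimal value $V_t \triangleq J(\GKxnf^*_{t|t},\unf^*_{\cdot|t})$ as a Lyapunov candidate. Using the candidate from Step~1 and the terminal-cost condition~\eqref{eq:ass_tcw}, the appended stage cost and terminal cost telescope to yield the standard decrease
\[
V_{t+1} - V_t \le -\bigl\|[\GKxnf^*_{t|t};\,\unf^*_{t|t}]\bigr\|_\scw^2 \le -\lambda_{\min}(\scw)\,\|\GKxnf^*_{t|t}\|_2^2,
\]
while $V_t \ge \lambda_{\min}(\scw)\|\GKxnf^*_{t|t}\|_2^2$. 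For an upper bound of the form $V_t \le c\|\GKxnf^*_{t|t}\|_2^2$ I would use the zero-input candidate in the terminal region (feasible locally around the origin by Lemma~\ref{lem:ts} whenever the tightened constraints are not binding) together with~\eqref{eq:ass_tcw} to bound the infinite-horizon cost by the terminal cost. This yields exponential convergence of $\GKxnf^*_{t|t}$ to the origin up to an offset driven by the disturbance-dependent terms $\cf_{t|t}$ and $\|\qnf\|_\peak$, i.e., a $\KL+\Kinf$ bound on $\|\GKxnf^*_{t|t}\|$. To pass from $\GKxnf^*_{t|t}$ to $\GKx_t$, I use the decomposition $\GKx_t - \GKxnf^*_{t|t} = \initopt^*_t \eGKxnf_{t|t-1} + (1-\initopt^*_t)\eGKxe_t$ and bound each term by combining Theorem~\ref{thm:constraint_tightening} and Assumption~\ref{ass:obs_bound} (both give $\Kinf$ bounds in $\dmax$ and $\KL$ decay in $t$ from the initial conditions $\cf_{0|-1}$ and $\ce_0$, which themselves are quadratic in $\|\exnf_0\|$). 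The controller state $\Kx_t$ is a component of $\GKx_t$, so the resulting estimate is of the desired ISS form~\eqref{eq:iss} with $\|\x_0\| + \|\exnf_0\|$ as the initial-condition aggregate.

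\emph{Main obstacle.} The core technical difficulty is obtaining a quadratic upper bound on $V_t$ (needed to get a $\Kinf$ upper bound on the Lyapunov function in terms of $\|\GKxnf^*_{t|t}\|$), because the feasible set shrinks as $\dmax$ grows and the disturbance enters the tube dynamics~\eqref{eq:cf} multiplicatively through $\|\qnf\|_2^2$. I anticipate handling this by restricting to a sublevel set of $V_0$ where the terminal set is reachable with $\unf=0$, and carefully propagating the $\cf$-contribution as an additive $\Kinf(\dmax)$ term throughout the Lyapunov decrease, so that the final comparison-function argument separates the $\KL$ part in the initial data from the $\Kinf$ part in $\|\d\|_\peak$.
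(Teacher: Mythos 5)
Your Steps 1 and 2 match the paper's argument (shifted candidate with $\initopt_{t+1}=1$, terminal conditions via \eqref{eq:ts_inv1}--\eqref{eq:ts_feas}, and constraint satisfaction from $\z_t=\znf_{t|t}+\eznf_{t|t}$ together with \eqref{eq:constraint_tightening} and \eqref{eq:mpc:constraint} at $k=t$). The gap is in Step 3. You propose to prove ISS of the true state by showing the nominal state $\GKxnf_{t|t}$ decays and then bounding $\GKx_t-\GKxnf_{t|t}=\initopt_t\,\eGKxnf_{t|t-1}+(1-\initopt_t)\,\eGKxe_t$ ``by combining Theorem~\ref{thm:constraint_tightening} and Assumption~\ref{ass:obs_bound}.'' Neither result delivers what you need. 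Theorem~\ref{thm:constraint_tightening} bounds only the constraint-output error $\|\eznf_{k|t}\|_2$, not the state error $\eGKxnf$; extracting a state bound requires going back into its proof (the bound on $\chi_{t|t}^\top\P\chi_{t|t}$, the IQC, and $\P-\ubar\X_1-\ubar\X_2\succ0$ from \eqref{eq:stab_LMI2}), and even then the bound is in terms of $\cf_{t|t}$, whose recursion is driven by $\|\qnf_{k|t}\|_2^2$ and, through \eqref{eq:init_opt}, by $\ce_t$. Crucially, Assumption~\ref{ass:obs_bound} only states that $\ce_t$ is a \emph{valid} bound satisfying \eqref{eq:eGKxe_bound}; it contains no quantitative decay or $\Kinf(\dmax)$ bound on $\ce_t$ (or on the realized estimation error). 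The smallness statement \eqref{eq:p2p_obs} comes from Theorem~\ref{thm:obs_p2p}, whose LMIs \eqref{eq:obs_LMI1}--\eqref{eq:obs_LMI2} are \emph{not} among the hypotheses of Theorem~\ref{thm:iss}. So under the stated assumptions, $\ce_t$ may be arbitrarily conservative, the optimizer may still pick $\initopt_t<1$, and your error bound on $\GKx_t-\GKxnf_{t|t}$ degenerates; your route cannot close without adding an estimator-quality assumption or fixing $\initopt_t\equiv 1$.

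The paper avoids this entirely: it never passes through the tube to bound the true state. Instead, Assumptions~\ref{ass:iqc} and~\ref{ass:p2p_LMIs} give $\ell_{2,\rho}$-stability of the prestabilized uncertain loop $\GK\star\Delta$ (via \cite[Theorem~3]{Schwenkel2025}), so $\|\GKx_t\|_2^2$ is bounded by exponentially weighted sums of $\|\d_k\|_2^2$ and of the \emph{applied} MPC inputs $\|\unf_k\|_2^2$ plus a $\rho^{2t}$-decaying initial-condition term. The value-function decrease (established with the $\initopt=1$ candidate, so it needs no estimator information) is then used only to obtain a $\KL$ bound on $\|\unf_{t|t}\|$ and, via a convolution/induction argument, on $\sum_{k=0}^{t-1}\rho^{2t-2k}\|\unf_k\|_2^2$, which yields \eqref{eq:iss}. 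This is why the paper can remark that all guarantees also hold with $\initopt_t\equiv1$ without Assumption~\ref{ass:obs_bound}. A secondary, minor point: you do not need a quadratic upper bound on $V_t$ (your ``main obstacle''); the paper only needs $J(\GKxnf_{t|t},\unf_{\cdot|t})\leq\alpha_6(\|\GKxnf_{t|t}\|_2)$ with $\alpha_6\in\Kinf$, which suffices for the standard $\KL$ Lyapunov argument.
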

The proof can be found in Section~\ref{sec:proofs}. 
As all statements of the theorem are proven based on a feasible candidate  solution with $\initopt_t=1$, one can analogously show the guarantees from Theorem~\ref{thm:iss} for a fixed $\initopt_t = 1$ without requiring Assumption~\ref{ass:obs_bound}.
The input-to-state stability bound~\eqref{eq:iss} depends on the initial condition and the error in the initial estimate.

\section{Robust estimation with error bound}\label{sec:estimator}
In this section, we design an estimator providing an estimate $\GKxe_t$ for $\GKx_t$ and an error bound $\ce_t$ satisfying Assumption~\ref{ass:obs_bound}. 
\arxivonly{Alongside we provide a convex solution to the robust estimation problem via IQCs with peak-to-peak performance. }
As we compute the input $\u_t$ based on the estimate $\GKxe_t$, which is computed based on the measurement $\y_t$, we require that $\y_t$ is independent of $\u_t$. 
\begin{assumption}[No feedthrough from $\u$ to $\y$]\label{ass:no_alg_loop}
	Let $a,u\in\ell_{2,\mathrm{e}}^\nq$, $t\in\N$, define $u^{(t-1)}$ to be the signal $\u$ truncated at time $t-1$, i.e., $u^{(t-1)}_k = u_k$ for all $k\in \I_{[0,t-1]}$ and $u^{(t-1)}_k = 0$ for $k\geq t$.
	Then, for all such $a,u,t$ we have $\D\G\y\p \big(\Delta(a+\D\G\q\u \u)\big)_{t} = \D\G\y\p \big( \Delta(a+\D\G\q\u \u^{(t-1)} )\big)_{t}$.
\end{assumption}
The proposed estimator estimates the error to a (known) nominal system
\begin{flalign}\label{eq:GKxnp_sys}
	\left[\GKxnp_{t+1};\, \qnp_{t};\, \bullet;\, \ynp_t \right] &=
	\Theta \left[\GKxnp_{t};\, 0;\, 0;\, \unf_{t}\right], \ \GKxnp_0=[\xnf_0;\,0].\hspace{-1cm}&
\end{flalign}
The error between the nominal and the actual system is defined by $\eGKxnp \triangleq \GKx-\GKxnp$, $\eynp\triangleq \y-\ynp$, and $\eqnp\triangleq \q-\qnp$.
Due to~\eqref{eq:DGKx_sys},~\eqref{eq:GKxnp_sys}, the error dynamics are
	\begin{align}\label{eq:eGKxnp_sys}
		\left[\eGKxnp_{t+1};\, \eqnp_{t};\, \bullet;\, \eynp_t \right] &=
		\Theta \left[\eGKxnp_{t};\, \p_t;\, \d_t;\, 0 \right]
	\end{align}
with the initial condition $\eGKxnp_0=\GKx_0-\GKxnp_0$ and where $\p=\Delta(\q)=\Delta(\qnp+\eqnp)$.
The estimator $\L$ with state $\Lx$ uses~$\eynp_t$ and~$\qnp_t$ to compute an estimate $\eGKxnpe_t$ for the state $\eGKxnp_t$
\begin{subequations}\label{eq:obs_dyn}
	\begin{align}
		\Lx_{t+1}    & = \A\L\Lx_t + \BB\L \eynp_t \\ 
		\eGKxnpe_{t} & = \CC\L\Lx_t + \DD\L \eynp_t  
	\end{align}
\end{subequations}
with initial condition $\Lx_0=0\in\R^{\nL}$.
The estimate $\GKxe_t$ for $\GKx_t$ can then be computed by $\GKxe_t \triangleq \GKxnp_t + \eGKxnpe_t$.
Hence, the estimation error satisfies $\eGKxe_t = \GKx_t-\GKxe_t =  \eGKxnp_t-\eGKxnpe_t$.
To establish~\eqref{eq:eGKxe_bound} we choose the performance channel $\wobs_t \triangleq \begin{bmatrix} \qnp_t;\, \d_t \end{bmatrix}$ to $\zobs_t \triangleq \rho^t \ubar\chi_t = [\rho^t\filtx_t;\,\eGKxnp_t]-[0;\,\eGKxnpe_t]$.
The augmented state $\xi_t=\begin{bmatrix} \filtx_t;\, \overline{\eGKxnp}{}_t;\, \lt \Lx_t \end{bmatrix}$, where $\overline{\eGKxnp}_t \triangleq \rho^{-t}\eGKxnp_t $, $\lt \Lx_t \triangleq \rho^{-t} \Lx_t$ follows 
\begin{align}\label{eq:xi_sys}
	\left[\xi_{t+1};\,\filty_t;\,\barzobs_t\right] = \obsrho \left[\xi_t;\,\lt\p_t;\,\barwobs_t\right]
\end{align}
with initial condition $\xi_0 = [0;\,\eGKxnp_0;\,0]$ and where the matrix $\obsrho$ is defined in~\eqref{eq:ABCDobs} in the appendix.
The following theorem provides a performance bound of an estimator $\L$.

\begin{theorem}\label{thm:obs_p2p}
	Let Assumptions~\ref{ass:iqc} and~\ref{ass:no_alg_loop} hold.
	Suppose there exist $(\M_3,\X_3)\in \MXset$, $(\M_4,\X_4)\in\MXset$, $\Pobs={\Pobs}^\top$, and $\gamobs\geq \muobs\geq 0$ satisfying
	\begin{align}\label{eq:obs_LMI1}
		&\!\!\!\symb {
			{\arraycolsep=2pt\diagmat{-\Pobs                                      \\&\Pobs\\&&\Mobs\\&&& -\muobs I}\hspace{-0.1cm}
		\begin{pmatrix}
				I            & 0                & 0                \\
				\A\obsrho    & \B\obsrho\p      & \B\obsrho\w      \\[0.5mm]
				\C\obs\filty & \D\obs{\filty}\p & \D\obs{\filty}\w \\
				0            & 0                & I
			\end{pmatrix}}}\prec 0\\\nonumber
		&\!\!\!\symb{
			{\arraycolsep=1pt
					\begin{pmatrix}
						\ubar\X_3-\Pobs                                     \\
						 & \ubar\X_4                                        \\
						 &           & \M_4                                 \\
						 &           &      & P                             \\
						 &           &      &   & -\beta^\mathrm{o} I
					\end{pmatrix}
				\hspace{-0.1cm}}
				{\arraycolsep=2pt
			\begin{pmatrix}
				I            & 0              & 0              \\[0.3mm]
				\A\obsrho    & \B\obsrho\p    & \B\obsrho\w    \\[0.8mm]
				\C\obs\filty & \D\obs\filty\p & \D\obs\filty\w \\[0.8mm]
				\C\obs\z     & \D\obs\z\p     & \D\obs\z\w     \\[0.5mm]
				0            & 0              & I
			\end{pmatrix}}
		} \prec 0\\[-0.5cm]
		\label{eq:obs_LMI2}
	\end{align}
	with $\Mobs=\M+\M_3+\M_4$, $\alpha = \frac{\rho^2}{1-\rho^2}$, $\beta^\mathrm{o}=\alpha(\gamobs-\muobs)$, and $\ubar\X_i = \diag(\X_i,0)$ for $i\in\{3,4\}$.
	Further, let $\cp_0\in\R$ satisfy $\cp_0 \geq \xi_0^\top  \Pobs \xi_0$.
	Then, Assumption~\ref{ass:obs_bound} holds with
	\begin{subequations}\label{eq:cep}
		\begin{align}\label{eq:ce}
			\ce_t     & = \cp_{t} + \beta^\mathrm{o} (\|\mathscr{C} \GKxnp_t\|^2_2+\dmax^2)       \\
			\cp_{t+1} & = \rho^2\cp_{t} + \muobs  \rho^2\big(\|\qnp_t\|_2^2 + \dmax^2\big) \label{eq:cp}
		\end{align}
	\end{subequations}
	where $\mathscr{C}\triangleq (I-\mathscr{D}(\mathscr{D}^\top \mathscr{D})^{-1}\mathscr{D}^\top)\C\GK\q$ and where the columns of $\mathscr{D}$ are a basis of the image of $\D\G\q\u$.
	Moreover, if $\cp_0\leq \alpha \muobs (\|\qnp\|_\peak^2+\dmax^2)$, then for all $t\in \N$ we have
	\begin{align}\label{eq:p2p_obs}
		\ce_t & \leq \alpha \gamobs (\|\qnp\|_\peak^2+\dmax^2).
	\end{align}
\end{theorem}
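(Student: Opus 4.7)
The plan is to follow a dissipation-plus-IQC argument paralleling the proof of Theorem~\ref{thm:constraint_tightening_no_opt} (and \cite[Thm.~3]{Schwenkel2025}), now applied to the estimator's augmented system~\eqref{eq:xi_sys}. First, I would interpret LMI~\eqref{eq:obs_LMI1} as a peak-to-peak storage certificate: multiplying it from both sides by $\bigl[\xi_t;\,\lt\p_t;\,\barwobs_t\bigr]$ yields the one-step dissipation inequality $\xi_{t+1}^\top \Pobs \xi_{t+1} - \xi_t^\top \Pobs \xi_t + \filty_t^\top \Mobs \filty_t \leq \muobs\|\barwobs_t\|_2^2$. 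Summing from $j=0$ to $t-1$, bounding $\|\barwobs_j\|_2^2 \leq \rho^{-2j}(\|\qnp_j\|_2^2+\dmax^2)$, and using $\xi_0^\top \Pobs \xi_0 \leq \cp_0$ together with the telescoping identity $\muobs\rho^{-2j}(\|\qnp_j\|_2^2+\dmax^2) = \rho^{-2(j+1)}\cp_{j+1} - \rho^{-2j}\cp_j$ (which rearranges~\eqref{eq:cp}) gives
\begin{equation*}
\xi_t^\top \Pobs \xi_t + \sum_{j=0}^{t-1}\filty_j^\top \Mobs \filty_j \leq \rho^{-2t}\cp_t.
\end{equation*}

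Next, I would interpret LMI~\eqref{eq:obs_LMI2} as a performance bound. Multiplying by the same test vector and observing from the definitions of $\obsrho$ and $\zobs_t = \rho^t \ubar\chi_t$ that $\barzobs_t = \ubar\chi_t$, one obtains
\begin{equation*}
\ubar\chi_t^\top P \ubar\chi_t \leq \xi_t^\top \Pobs \xi_t + \beta^\mathrm{o}\|\barwobs_t\|_2^2 - \filtx_t^\top X_3 \filtx_t - \filtx_{t+1}^\top X_4 \filtx_{t+1} - \filty_t^\top M_4 \filty_t.
\end{equation*}
The two filter cross-terms are absorbed by applying Definition~\ref{def:iqc} with $(M_3,X_3)$ at time $t$ and with $(M_4,X_4)$ at time $t+1$, yielding $-\filtx_t^\top X_3 \filtx_t \leq \sum_{j<t}\filty_j^\top M_3\filty_j$ and $-\filtx_{t+1}^\top X_4 \filtx_{t+1} - \filty_t^\top M_4 \filty_t \leq \sum_{j<t}\filty_j^\top M_4\filty_j$. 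Combining with the first step and the identity $\Mobs = \M + \M_3 + \M_4$ yields the key intermediate inequality
\begin{equation*}
\ubar\chi_t^\top P \ubar\chi_t + \sum_{j=0}^{t-1}\filty_j^\top \M\filty_j \leq \rho^{-2t}\bigl(\cp_t + \beta^\mathrm{o}(\|\qnp_t\|_2^2 + \dmax^2)\bigr).
\end{equation*}

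The main obstacle is that the right-hand side still depends on $\unf_t$ through $\qnp_t = \C\GK\q \GKxnp_t + \D\G\q\u \unf_t$, whereas~\eqref{eq:ce} uses the smaller quantity $\|\mathscr{C}\GKxnp_t\|_2^2$. To close this gap I would trace the signal dependencies and invoke Assumption~\ref{ass:no_alg_loop}: the filter state $\filtx_t$, the error $\eGKxnp_t$, the innovation $\eynp_t$, and the estimator state $\Lx_t$ are all functions of $\u_0,\ldots,\u_{t-1}$ only, so the entire left-hand side (including each $\filty_j$ for $j<t$) is independent of $\unf_t$. Holding the past fixed, the inequality therefore remains valid for any counterfactual choice of $\unf_t$; minimizing $\|\C\GK\q \GKxnp_t + \D\G\q\u \unf_t\|_2^2$ over $\unf_t$ yields $\|\mathscr{C}\GKxnp_t\|_2^2$ via the orthogonal projection onto $\mathrm{im}(\D\G\q\u)^\perp$, which is precisely the definition of $\mathscr{C}$. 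This verifies Assumption~\ref{ass:obs_bound} with $\ce_t$ as in~\eqref{eq:ce}.

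Finally, I would establish~\eqref{eq:p2p_obs} by induction on~\eqref{eq:cp}: using the algebraic identity $\rho^2\alpha + \rho^2 = \alpha$ (which follows from $\alpha = \rho^2/(1-\rho^2)$), the hypothesis $\cp_t \leq \alpha\muobs(\|\qnp\|_\peak^2+\dmax^2)$ propagates forward under~\eqref{eq:cp} together with $\|\qnp_t\|_2^2 \leq \|\qnp\|_\peak^2$. Combining with $\|\mathscr{C}\GKxnp_t\|_2^2 \leq \|\qnp_t\|_2^2 \leq \|\qnp\|_\peak^2$ (since $\mathscr{C}\GKxnp_t$ is an orthogonal projection of $\qnp_t$) and the identity $\alpha\muobs + \beta^\mathrm{o} = \alpha\gamobs$ gives~\eqref{eq:p2p_obs}.
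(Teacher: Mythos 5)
Your proposal is correct and takes essentially the same approach as the paper: \eqref{eq:obs_LMI1} is used as a dissipation inequality telescoped against the recursion \eqref{eq:cp}, \eqref{eq:obs_LMI2} together with the IQCs for $(\M_3,\X_3)$ at time $t$ and $(\M_4,\X_4)$ at time $t+1$ gives the performance bound, Assumption~\ref{ass:no_alg_loop} justifies minimizing over $\unf_t$ to replace $\|\qnp_t\|_2^2$ by $\|\mathscr{C}\GKxnp_t\|_2^2$, and the peak bound \eqref{eq:p2p_obs} follows by the same induction. The only (immaterial) presentational difference is that the paper plugs the counterfactual trajectory generated by the minimizer ${\unf_t}^{\star}$ into \eqref{eq:obs_LMI2} and the IQC from the outset, whereas you first derive the inequality along the actual trajectory and then invoke the counterfactual replacement of $\unf_t$.
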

The proof can be found in the Appendix~\ref{sec:proofs}.
The bound~\eqref{eq:p2p_obs} tells us that it is desirable to minimize $\gamobs$ in order to obtain small $\ce_t$.
\cdconly{In an extended version~\cite{Schwenkel2025b} of this paper, we show that robust estimation problem using IQCs with peak-to-peak performance can be convexified, which can be used to compute $\L$ satisfying~\eqref{eq:obs_LMI1}--\eqref{eq:obs_LMI2} and minimizing $\gamobs$.
}
\arxivonly{
Next, we describe for given $\K$, $\M$, and $\P$ how to find $\L$, $\M_3$, $\M_4$, $\X_3$, $\X_4$, and $\muobs$ that minimize $\gamobs$.
While the controller design in~\cite{Schwenkel2025} requires an iterative algorithm, we show that the problem of robust estimation with minimal peak-to-peak gain bound $\gamobs$ can be reformulated as a single convex SDP, similar to the continuous time robust estimation for $\mathcal H_\infty$ performance~\cite{Scherer2008}.

\begin{theorem}\label{thm:obs_syn}
	Let~\eqref{eq:stab_LMI2} hold, let $\nL = \nx+\nK+\nfilt$, and decompose $\P=\begin{pmatrix} \P_{11} & \P_{12} \\ \P_{21} & \P_{22} \end{pmatrix}$, $\Pobs = \begin{pmatrix} \Pobs_{11} & \Pobs_{12} \\ \Pobs_{21} & \Pobs_{22} \end{pmatrix}$ with $\P_{11} \in\S^\nfilt$, $\P_{22}\in\S^{\nx+\nK}$, $\Pobs_{11},\Pobs_{22}\in\S^{\nL}$.
	Then, there exist $\Pobs$, $\gamobs$, $\muobs$, $\M_3$, $\M_4$, $\X_3$, $\X_4$, $\A\L$, $\BB\L$, $\CC\L$, $\DD\L$ satisfying the matrix inequalities~\eqref{eq:obs_LMI1},~\eqref{eq:obs_LMI2}, and $\Pobs_{22} \succ 0$ if and only if there exist $\Psynobs_{1}$, $\Psynobs_{2}$, $\gamobs$, $\muobs$, $\M_3$, $\M_4$, $\X_3$, $\X_4$, $\Ksynobs$, $\Lsynobs$, $\Msynobs$, and $\Nsynobs$ satisfying
	\begin{align} \label{eq:obs_LMI1_syn}
		 & \begin{pmatrix}
			   \mathcal{O}_{1} & \star \\  \begin{pmatrix}\Asynobs_1     & \Asynobs_2  & \Bsynobs_1       & \Bsynobs_2\end{pmatrix} & \Psynobs_{1}- \Psynobs_{2}
		   \end{pmatrix} \prec 0 \\ \label{eq:obs_LMI2_syn}
		 & \begin{pmatrix}
			   \mathcal{O}_{2} & \star \\ \begin{pmatrix}
				\Csynobs_1 & \Csynobs_2 & \Dsynobs_{1} & \Dsynobs_2 \\
			\end{pmatrix} & -\P_{22}^{-1}
		   \end{pmatrix} \prec 0
	\end{align}
	with $\nw = \nd+\nq$ and
	\begin{align*}
		\mathcal{O}_1\! & \triangleq \diag\left(
			\begin{pmatrix}
				-\Psynobs_{1} & -\Psynobs_{1}                                        \\
				-\Psynobs_{1} & -\Psynobs_{2}                                        \\
				\end{pmatrix},
			\begin{pmatrix}
					0 & 0 \\
					0 & 0
				\end{pmatrix}\right)                                                   \\
			& \quad + \symb{\begin{pmatrix}
				\Psynobs_{2} & I                     \\
				I            & 0                     \\
			\end{pmatrix} \begin{pmatrix} 
			\A\gpobsrho    & \A\gpobsrho & \B\gpobsrho\p    & \B\gpobsrho\w    \\
			\Asynobs_1     & \Asynobs_2  & \Bsynobs_1       & \Bsynobs_2       \\
			\end{pmatrix}} \\
				& \quad + \symbscalar{\Mobs \begin{pmatrix}
					\C\gpobs\filty & \C\gpobs\filty            & \D\gpobs\filty\p & \D\gpobs\filty\w 					
				\end{pmatrix}} - \muobs \diag(0,I_{\nw}) \\
		\mathcal{O}_2\! & \triangleq \diag\left(
		\begin{pmatrix}
			\ubar\X_3-\Psynobs_{1} & \ubar\X_3-\Psynobs_{1} \\
			\ubar\X_3-\Psynobs_{1} & -\Psynobs_{2}
			\end{pmatrix},
		\begin{pmatrix}
				0 & 0 \\
				0 & 0
			\end{pmatrix}\right)                                                   \\
		              & \quad + \symbscalar{\ubar\X_4
			\begin{pmatrix}
				\A\gpobsrho & \A\gpobsrho & \B\gpobsrho\p & \B\gpobsrho\w
			\end{pmatrix}}\\
			& \quad + \symb{
  \begin{pmatrix}
	  \P_{11} & \P_{12} \\
	  \P_{21} & 0
  \end{pmatrix}
\begin{pmatrix}
	  \begin{pmatrix}
		  I & 0
	  \end{pmatrix} &
	  \begin{pmatrix}
		  I & 0
	  \end{pmatrix} & 0          & 0                           \\
	  \Csynobs_1      & \Csynobs_2 & \Dsynobs_{1} & \Dsynobs_2\end{pmatrix}}\\
		& \quad + \symbscalar{\M_4
			\begin{pmatrix}
				\C\gpobs\filty & \C\gpobs\filty & \D\gpobs\filty\p & \D\gpobs\filty\w
			\end{pmatrix}
		} -\beta^\mathrm{o} \diag(0,I_{\nw})                                                                     
	\end{align*}
	and $\Asynobs_1=\Lsynobs\C\gpobs\y + \Ksynobs$, $\Asynobs_2=\Lsynobs\C\gpobs\y$, $\Bsynobs_1=\Lsynobs\D\gpobs\y\p$, $\Bsynobs_2=\Lsynobs\D\gpobs\y\w$, $\Csynobs_1=\begin{pmatrix}
		0 & I
	\end{pmatrix}
-\Nsynobs\C\gpobs\y - \Msynobs$, $ \Csynobs_2=
\begin{pmatrix}
	0 & I
\end{pmatrix}
-\Nsynobs\C\gpobs\y$, $\Dsynobs_{1}=-\Nsynobs\D\gpobs\y\p$, and $\Dsynobs_2=-\Nsynobs\D\gpobs\y\w$.
	In particular, a solution of~\eqref{eq:obs_LMI1},~\eqref{eq:obs_LMI2} is given by
	\begin{flalign}\label{eq:Pobs_syn}
		\Pobs            & =
		\begin{pmatrix}
			\Psynobs_{2} & I \\ I & (\Psynobs_{2} -\Psynobs_{1})^{-1}
		\end{pmatrix} \\\label{eq:L_syn}
		\begin{pmatrix}
			\A\Lrho  & \BB\Lrho \\
			\CC\L & \DD\L
		\end{pmatrix} & =
		\begin{pmatrix}
			\Ksynobs & \Lsynobs \\
			\Msynobs & \Nsynobs
		\end{pmatrix}
		\begin{pmatrix}
			(\Psynobs_{1} - \Psynobs_{2})^{-1} & 0 \\
			0                                  & I
		\end{pmatrix}.\hspace{-2cm}&
	\end{flalign}
\end{theorem}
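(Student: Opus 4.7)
The plan is to follow the classical Gahinet--Apkarian / Scherer output-feedback synthesis technique, adapting the continuous-time $\Hinf$ robust estimator derivation of~\cite{Scherer2008} to the present discrete-time peak-to-peak setting with its pair of coupled analysis LMIs. Both implications are established by an invertible congruence transformation on $\Pobs$ together with an explicit change of variables that absorbs the products of the estimator matrices with the Lyapunov blocks.

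First, I would write the augmented matrices $\A\obsrho,\B\obsrho\p,\B\obsrho\w,\C\obs\filty,\D\obs\filty\p,\D\obs\filty\w,\C\obs\z,\D\obs\z\p,\D\obs\z\w$ in the block form that separates the plant-filter state $[\filtx;\,\overline{\eGKxnp}]\in\R^{\nL}$ from the estimator state $\lt\Lx\in\R^{\nL}$. In this block form the estimator data $(\A\Lrho,\BB\Lrho,\CC\L,\DD\L)$ enter affinely, so the only source of bilinearity in~\eqref{eq:obs_LMI1},~\eqref{eq:obs_LMI2} is the product of $\Pobs$ with the estimator matrices. I would then partition $\Pobs$ conformally into $\nL\times\nL$ blocks and exploit the similarity freedom $\lt\Lx\mapsto T_\L^{-1}\lt\Lx$, which leaves both LMIs invariant while transforming $\Pobs_{12}\mapsto \Pobs_{12}T_\L$, to normalize $\Pobs_{12}=I$; strict feasibility together with $\Pobs_{22}\succ 0$ ensures that $\Pobs_{12}$ can be made invertible by an arbitrarily small perturbation if needed. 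Setting $\Psynobs_2:=\Pobs_{11}$ and $\Psynobs_1:=\Pobs_{11}-\Pobs_{22}^{-1}$ then recovers~\eqref{eq:Pobs_syn}, and a Schur complement gives $(\Pobs^{-1})_{11}=\Psynobs_1^{-1}$.

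Next, I would apply the congruence transformation $\mathrm{diag}(\mathcal T,I,I,I)$ with $\mathcal T=\bigl(\begin{smallmatrix}I & I \\ 0 & \Pobs_{22}\end{smallmatrix}\bigr)$ to~\eqref{eq:obs_LMI1}. The resulting blocks that contain products of $\Pobs$ with the estimator data are linearized by the substitutions defining $\Ksynobs,\Lsynobs,\Msynobs,\Nsynobs$ via~\eqref{eq:L_syn}, producing exactly the affine expressions $\Asynobs_1,\Asynobs_2,\Bsynobs_1,\Bsynobs_2$ of the theorem. A block-by-block comparison then yields $\mathcal{O}_1\prec 0$. For~\eqref{eq:obs_LMI2} the additional difficulty is the $\P$-weighted quadratic form on $\C\obs\z$: since $\P$ is a fixed data matrix from Assumption~\ref{ass:p2p_LMIs} rather than a free Lyapunov variable, it must remain strictly affine throughout. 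I would apply the same congruence and then take a Schur complement against the $\P$-block, which isolates $-\P_{22}^{-1}$ as an independent block in the lower-right corner of $\mathcal{O}_2$, while $\P_{11},\P_{12},\P_{21}$ appear linearly in the upper-left portion multiplied by $\Csynobs_1,\Csynobs_2,\Dsynobs_1,\Dsynobs_2$, which again carry the estimator data affinely through $\Nsynobs$ and $\Msynobs$.

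For the converse direction, given $\Psynobs_1,\Psynobs_2,\Ksynobs,\Lsynobs,\Msynobs,\Nsynobs$ feasible for~\eqref{eq:obs_LMI1_syn},~\eqref{eq:obs_LMI2_syn}, the formulas~\eqref{eq:Pobs_syn},~\eqref{eq:L_syn} reconstruct $\Pobs$ and $(\A\L,\BB\L,\CC\L,\DD\L)$; well-definedness of~\eqref{eq:L_syn} requires $\Psynobs_2-\Psynobs_1\succ 0$, which follows from the $\Psynobs_1,\Psynobs_2$ blocks of~\eqref{eq:obs_LMI1_syn} by a Schur complement, after which $\Pobs_{22}=(\Psynobs_2-\Psynobs_1)^{-1}\succ 0$ is automatic. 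The main obstacle I anticipate is keeping the Schur-complement bookkeeping for~\eqref{eq:obs_LMI2} correct: one must distinguish the plant-filter portion of the state (on which the data matrix $\P$ from Assumption~\ref{ass:p2p_LMIs} acts through the $[I_\nfilt\ 0]$ and $[0\ I]$ block extractions) from the estimator portion, and verify that the structured appearance of $\P_{11},\P_{12},\P_{21}$ in $\mathcal{O}_2$ exactly reproduces the original $\P$-weighting without introducing any nonlinearity in the synthesis variables.
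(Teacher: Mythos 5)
Your overall route is the same as the paper's: perturb/normalize the off-diagonal block of $\Pobs$, introduce the linearizing estimator variables of~\eqref{eq:L_syn}, transform the two analysis LMIs by congruence, and Schur-complement the remaining quadratic terms against $\Psynobs_2-\Psynobs_1$ (for~\eqref{eq:obs_LMI1_syn}) and against $\P_{22}\succ 0$ from~\eqref{eq:stab_LMI2} (for~\eqref{eq:obs_LMI2_syn}); your split of the fixed weight $\P$ into the affine part $\bigl(\begin{smallmatrix}\P_{11}&\P_{12}\\ \P_{21}&0\end{smallmatrix}\bigr)$ plus the $\P_{22}$-part, and your converse direction, also match the paper. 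However, the one step you make fully explicit — the congruence matrix — is wrong, and the derivation fails exactly there. With your normalization $\Pobs_{12}=\Pobs_{21}=I$, so that $\Psynobs_2=\Pobs_{11}$ and $\Psynobs_1=\Pobs_{11}-\Pobs_{22}^{-1}$, your choice $\mathcal T=\bigl(\begin{smallmatrix}I&I\\0&\Pobs_{22}\end{smallmatrix}\bigr)$ gives
\begin{align*}
\mathcal T^\top \Pobs \mathcal T=\begin{pmatrix}\Pobs_{11}&\Pobs_{11}+\Pobs_{22}\\ \Pobs_{11}+\Pobs_{22}&\Pobs_{11}+2\Pobs_{22}+\Pobs_{22}^3\end{pmatrix},
\end{align*}
which is not the block $\bigl(\begin{smallmatrix}\Psynobs_1&\Psynobs_1\\ \Psynobs_1&\Psynobs_2\end{smallmatrix}\bigr)$ required by the constant part of $\mathcal O_1$ (and $\mathcal O_2$), and is not even affine in the intended decision variables (note the $\Pobs_{22}^3$ term). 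So the announced ``block-by-block comparison'' cannot produce~\eqref{eq:obs_LMI1_syn} or~\eqref{eq:obs_LMI2_syn} from this transformation.

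The transformation that works is the one from~\cite{Scherer2008} used in the paper: $\Ysyntrafoobs=\bigl(\begin{smallmatrix}I&I\\ \Zsynobs&0\end{smallmatrix}\bigr)$ with $\Zsynobs=-\Pobs_{22}^{-1}\Pobs_{21}$ (in your normalized setting $\Zsynobs=-\Pobs_{22}^{-1}$), for which $\Ysyntrafoobs^\top\Pobs\Ysyntrafoobs=\bigl(\begin{smallmatrix}\Psynobs_1&\Psynobs_1\\ \Psynobs_1&\Psynobs_2\end{smallmatrix}\bigr)$: the column $[I;\,0]$ extracts $\Pobs_{11}=\Psynobs_2$ and the column $[I;\,\Zsynobs]$ extracts the Schur complement $\Psynobs_1$. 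In the terms where $\Pobs$ weights the transformed dynamics you additionally need the factorization $\Pobs=\bigl(\begin{smallmatrix}I&0\\0&\Pobs_{12}^\top\end{smallmatrix}\bigr)\bigl(\begin{smallmatrix}\Psynobs_2&I\\ I&(\Psynobs_2-\Psynobs_1)^{-1}\end{smallmatrix}\bigr)\bigl(\begin{smallmatrix}I&0\\0&\Pobs_{12}\end{smallmatrix}\bigr)$ together with the identity $\bigl(\begin{smallmatrix}I&0\\0&\Pobs_{12}\end{smallmatrix}\bigr)\begin{pmatrix}\A\obsrho&\B\obsrho\p&\B\obsrho\w\end{pmatrix}\diag(\Ysyntrafoobs,I)=\bigl(\begin{smallmatrix}\A\gpobsrho&\A\gpobsrho&\B\gpobsrho\p&\B\gpobsrho\w\\ \Asynobs_1&\Asynobs_2&\Bsynobs_1&\Bsynobs_2\end{smallmatrix}\bigr)$, which is where the change of variables~\eqref{eq:L_syn} enters; splitting $\bigl(\begin{smallmatrix}\Psynobs_2&I\\ I&(\Psynobs_2-\Psynobs_1)^{-1}\end{smallmatrix}\bigr)=\bigl(\begin{smallmatrix}\Psynobs_2&I\\ I&0\end{smallmatrix}\bigr)+\bigl(\begin{smallmatrix}0&0\\0&(\Psynobs_2-\Psynobs_1)^{-1}\end{smallmatrix}\bigr)$ yields the affine quadratic term in $\mathcal O_1$ plus the term that is Schur-complemented (using $\Psynobs_2-\Psynobs_1=\Pobs_{12}\Pobs_{22}^{-1}\Pobs_{21}\succ 0$) into the border row and the block $\Psynobs_1-\Psynobs_2$ of~\eqref{eq:obs_LMI1_syn} — note that~\eqref{eq:obs_LMI1_syn} is this bordered form, so establishing ``$\mathcal O_1\prec 0$'' alone is not the target. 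Two minor points on your converse: $\Psynobs_2-\Psynobs_1\succ 0$ follows directly from negative definiteness of the lower-right diagonal block of~\eqref{eq:obs_LMI1_syn} (no Schur complement needed), and the relation $(\Pobs^{-1})_{11}=\Psynobs_1^{-1}$ you invoke is true but plays no role in the argument.
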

The proof can be found in the Appendix~\ref{sec:proofs}.

While Theorem~\ref{thm:obs_p2p} and~\ref{thm:obs_syn} is rather targeted to the specific performance output $\zobs_t=\rho^t \ubar\chi_t$ and the specific bound~\eqref{eq:eGKxe_bound}, we can want to briefly discuss we can analogously solve standard robust estimation problem using IQCs where we minimize the peak bound on the estimation error.
\begin{corollary}\label{cor:p2p_obs}
	Consider the assumptions of Theorem~\ref{thm:obs_p2p} with $\P_{11}=0$, $\P_{12}=\P_{21}^\top=0$, $\M=0$ and $\P_{22}=\alpha I$ and let $\dmax = \|\d\|_\peak$.
	Then $\|\eGKxe\|_\peak^2 \leq \gamobs \|\wobs\|_\peak^2$.
\end{corollary}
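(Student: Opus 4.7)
My plan is to specialize the argument of Theorem~\ref{thm:obs_p2p} to the degenerate IQC setting of the corollary. With $\P=\diag(0,\alpha I)$ and $\M=0$, the inequality~\eqref{eq:eGKxe_bound} in Assumption~\ref{ass:obs_bound} collapses to
\begin{align*}
\rho^{-2t}\ce_t \;\geq\; \ubar\chi_t^\top \P \ubar\chi_t \;=\; \alpha\rho^{-2t}\|\eGKxe_t\|_2^2,
\end{align*}
so $\|\eGKxe_t\|_2^2 \leq \ce_t/\alpha$ for every $t$. Hence it suffices to establish the uniform bound $\ce_t \leq \alpha\gamobs\|\wobs\|_\peak^2$.

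To obtain this, I would re-run the derivation of~\eqref{eq:cep} while keeping $\|\d_t\|_2^2$ grouped with $\|\qnp_t\|_2^2$ rather than replacing it by $\dmax^2$. Since $\dmax=\|\d\|_\peak$ and $\|\wobs_t\|_2^2=\|\qnp_t\|_2^2+\|\d_t\|_2^2$, both quadratic terms combine into the single upper bound $\|\wobs\|_\peak^2$. Under $\M=0$ and $\P_{11}=0$, the filter state $\filtx_t$ decouples from the Lyapunov function $V(\xi)=\xi^\top\Pobs\xi$ and from the performance output, and the IQC terminal-cost contributions $-\filtx_t^\top(\X_3+\X_4)\filtx_t$ are absorbed by the $\ubar\X_3-\Pobs$ and $\ubar\X_4$ diagonal blocks in~\eqref{eq:obs_LMI2}. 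Similarly, the $\mathscr{C}$-projection in~\eqref{eq:ce}, which in the general theorem breaks the algebraic dependency between $\ce_t$ and the as-yet unchosen control input $\unf_t$, becomes unnecessary once we directly upper-bound $\|\qnp_t\|_2^2$ by $\|\wobs\|_\peak^2$. The LMIs~\eqref{eq:obs_LMI1}--\eqref{eq:obs_LMI2} then reduce to the standard peak-to-peak dissipation pair
\begin{align*}
\cp_{t+1}\leq \rho^2\cp_t + \muobs\rho^2\|\wobs\|_\peak^2,\qquad \ce_t\leq \cp_t + \beta^\mathrm{o}\|\wobs\|_\peak^2.
\end{align*}

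Assuming the specialization $\cp_0\leq\alpha\muobs\|\wobs\|_\peak^2$ of the initialization bound of Theorem~\ref{thm:obs_p2p}, an induction using $\rho^2(\alpha+1)=\alpha$ yields $\cp_t\leq\alpha\muobs\|\wobs\|_\peak^2$ for all $t$. Substituting this into the bound on $\ce_t$ and using $\beta^\mathrm{o}=\alpha(\gamobs-\muobs)$ gives $\ce_t\leq\alpha\gamobs\|\wobs\|_\peak^2$, which combined with the first step produces $\|\eGKxe_t\|_2^2\leq\gamobs\|\wobs\|_\peak^2$ uniformly in $t$, i.e., $\|\eGKxe\|_\peak^2\leq\gamobs\|\wobs\|_\peak^2$.

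The main technical obstacle will be verifying that under $\P_{11}=\P_{12}=0$ and $\M=0$ the filter-related and $\mathscr{C}$-projection terms indeed drop out of~\eqref{eq:ce}--\eqref{eq:cp}. This requires a careful block-wise inspection of how each entry of the LMIs~\eqref{eq:obs_LMI1}--\eqref{eq:obs_LMI2} enters the proof of Theorem~\ref{thm:obs_p2p} and confirming that the particular choices made here decouple the filter dynamics, leaving a clean peak-to-peak gain LMI from $\barwobs$ to the estimation-error block of $\barzobs$.
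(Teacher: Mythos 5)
Your core chain of inequalities is sound, and its first and last links are exactly the paper's proof: specialize \eqref{eq:eGKxe_bound} with $\M=0$ and $\P=\diag(0,\alpha I)$ to get $\alpha\|\eGKxe_t\|_2^2={\zobs_t}^{\top}\P\,\zobs_t\leq \ce_t$ for all $t$, and then bound $\ce_t$ via the peak-to-peak estimate \eqref{eq:p2p_obs}. Where you diverge is in believing that this second step requires re-running the derivation of \eqref{eq:cep} and checking block-wise that filter-related and $\mathscr{C}$-projection terms ``drop out'' of \eqref{eq:obs_LMI1}--\eqref{eq:obs_LMI2}. That worry rests on a misconception: the choices $\P_{11}=0$, $\P_{12}=0$, $\P_{22}=\alpha I$, $\M=0$ only fix the performance weight and multiplier that enter the hypotheses of Theorem~\ref{thm:obs_p2p} as data; the LMIs are still assumed to hold with a full certificate $\Pobs$, nothing needs to decouple from it, and the conclusions \eqref{eq:eGKxe_bound} (via \eqref{eq:cep}) and \eqref{eq:p2p_obs} apply verbatim to this special case. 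The only substantive thing your re-derivation buys is the distinction between $\|\wobs\|_\peak^2$ and $\|\qnp\|_\peak^2+\dmax^2$: \eqref{eq:p2p_obs} literally yields $\ce_t\leq\alpha\gamobs(\|\qnp\|_\peak^2+\dmax^2)$, which dominates $\alpha\gamobs\|\wobs\|_\peak^2$, so the paper's one-line argument implicitly reads $\|\wobs\|_\peak^2$ as $\|\qnp\|_\peak^2+\dmax^2$, whereas keeping $\|\qnp_t\|_2^2+\|\d_t\|_2^2$ together as you propose gives the stated bound exactly (note $\|\mathscr{C}\GKxnp_t\|_2\leq\|\qnp_t\|_2$, so the projection causes no trouble). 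If you go that route, however, you must also strengthen the initialization to $\cp_0\leq\alpha\muobs\|\wobs\|_\peak^2$, since the theorem's hypothesis $\cp_0\leq\alpha\muobs(\|\qnp\|_\peak^2+\dmax^2)$ is weaker; your induction assumes this silently.
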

\begin{proof}
	The statement follows directly by using the conditions~\eqref{eq:eGKxe_bound} and~\eqref{eq:p2p_obs}, which yields for all $t\geq 0$ that ${\zobs_t}^\top \P\zobs_t = \alpha\|\eGKxe_t\|^2_2 \leq \alpha \gamobs \|\wobs\|_\peak^2$.
\end{proof}
Hence, we can also apply Theorem~\ref{thm:obs_syn} to design an estimator while minimizing the bound $\gamobs$ on the peak-to-peak gain from $\wobs$ to the estimation error $\eGKxe$. 
Other performance inputs or outputs can be included analogously.
}

\section{Implementation}
Summarizing the previous sections, to implement the proposed MPC scheme, we need to perform the following offline design steps: 
1. design $\K$ such that Assumption~\ref{ass:p2p_LMIs} holds with minimal $\gamma$ by following the algorithm from~\cite{Schwenkel2025};
\arxivonly{2. design $\L$ by minimizing~$\gamobs$ subject to \eqref{eq:obs_LMI1_syn} and~\eqref{eq:obs_LMI2_syn}; }
\cdconly{2. design $\L$ such that \eqref{eq:obs_LMI1} and~\eqref{eq:obs_LMI2} hold; }
3. design terminal conditions that satisfy Assumption~\ref{ass:terminal}.
After the offline design is completed, we can apply the scheme and execute the following steps online at each time $t$: 
\begin{enumerate}
	\item measure $\y_t$
	\item compute the estimate $\GKxe_t$ and $\ce_t$ via \eqref{eq:obs_dyn} and \eqref{eq:cep}
	\item compute $\unf_{t|t}$ by solving the MPC problem~\eqref{eq:mpc}
	\item apply the input $\u_t$ from~\eqref{eq:mpc_input}.
\end{enumerate}
\begin{remark}\label{rem:ts}
	While Lemma~\ref{lem:ts} provides a constructive proof, it is possible to design a larger terminal region. 
	To this end, we fix some ratio $r^\mathrm{f}\geq 1 $ and set $\cf^\mathrm{f} = \frac{\mu \rho^{2}}{1-\rho^2} \big(\dmax^2+r^\mathrm{f}\|\C\GK\q\tswroot^{-1}\|_2^2\GKxnf^\mathrm{f}\big)$.
	Due to $r^\mathrm{f}\geq 1$, the inequality~\eqref{eq:ts_inv2} holds. 
	Then, we maximize $\GKxnf^\mathrm{f}$ subject to~\eqref{eq:ts_feas}. 
	Note that $\tswroot$ does not need to be chosen as $\tswroot^\top \tswroot=\tcw$.
\end{remark}
\begin{remark}[Improvements]\label{rem:cfi}
	After $\K$ and $\L$ are designed, one can perform a joint analysis of the corresponding $\gamma$ and $\gamobs$ by minimizing $\gamma+\gamobs$ (or some weighted sum) subject to~\eqref{eq:stab_LMI1},~\eqref{eq:ana_LMI2},~\eqref{eq:stab_LMI2},~\eqref{eq:obs_LMI1},~\eqref{eq:obs_LMI2}. 
	The decision variables of this minimization are $\P$, $\Pobs$, $\mu$, $\muobs$, $\gamma$, $\gamobs$, $(\M_i,\X_i)_{i\in\{1,2,3,4\}}$. 
	Due to the nonlinear dependency on $\rho$, this variable is minimized via a line search procedure.
	Thereby, much smaller values of $\gamobs$ can be achieved for the price of a slightly larger $\gamma$.
	An additional modification to achieve significantly smaller tubes is to analyze the constraints componentwise.
	In particular, this means that for each $i\in\I_{[1,\nz]}$, we minimize $\gamma_i+\gamobs_i$ (or some weighted sum) subject to~\eqref{eq:stab_LMI1},~\eqref{eq:ana_LMI2},~\eqref{eq:stab_LMI2},~\eqref{eq:obs_LMI1},~\eqref{eq:obs_LMI2} where we replaced $\C\G\z$ and $\D\G\z j$ for all $j\in\{\p,\w,\u\}$ by $e_i^\top \C\G\z$ and $e_i^\top \D\G\z j$, where $e_i\in\R^\nz$ is the $i$-th unit vector, i.e., $I_\nz = \begin{pmatrix}e_1&\dots&e_\nz\end{pmatrix}$.
	Then, we obtain individual $\cf^{(i)}_{k|t}$ and $\cp_t^{(i)}$ for each $i\in\I_{[1,\nz]}$ and the constraint~\eqref{eq:mpc:constraint} in the MPC scheme becomes
	\begin{align*}
		e_i^\top \znf_{k|t} \leq 1 -\sqrt{\frac{\gamma_i}{\alpha_i}\cf_{k|t}^{(i)} + \frac{\gamma_i}{\alpha_i}\beta_i (\|\qnf_{k|t}\|_2^2+\dmax^2)}
	\end{align*}
	where $\alpha_i=\frac{\rho^2_i}{1-\rho^2_i}$, $\beta_i=\alpha_i(\gamma_i-\mu_i)$.
	By this approach, the shape of the resulting tube is not ellipsoidal anymore but similar to the polytopic shape of the constraint set, which reduces the conservatism significantly.
	Similarly, we obtain $\nz$ terminal constraints $\cf_{t+N|t}^{(i)} \leq \cf^\mathrm{f}_i$, but we stick to one $\|\tswroot \GKxnf_{t+N|t}\|_2\leq \GKxnf^\mathrm{f}$.
	To find $\GKxnf^\mathrm{f}$ and  $\cf^\mathrm{f}_i$, we adapt the procedure from Remark~\ref{rem:ts} such that we maximize $\GKxnf^\mathrm{f}$ subject to~\eqref{eq:ts_feas} with $\gamma_i$, $\alpha_i$, and $\beta_i$ for all $i\in\I_{[1,\nz]}$.
	Then, we set $\cf^\mathrm{f}_i = \frac{\alpha_i}{\gamma_i} (1-\|\C\GK\z\tswroot^{-1} \|_{2\shortrightarrow\infty} \sqrt{\GKxnf^\mathrm{f}})^2- \beta_i(\dmax^2+\|\tswroot \GKxnf_{t+N|t}\|_2\leq \GKxnf^\mathrm{f})$. 
\end{remark}
\begin{example}
	Consider the example from~\cite{Schwenkel2020} to highlight that the estimator-based output feedback approach has benefits even if the state $\x_t$ is fully known.
	The system is given by
	\begin{align*}
		\resizebox{\linewidth}{!}{\small$\blkmat{c:c:c:c}{\A\G&\B\G\p&\B\G\d&\B\G\u\newblkdash \C\G\q&\D\G\q\p&\D\G\q\d&\D\G\q\u\newblkdash \C\G\z&\D\G\z\p&\D\G\z\d&\D\G\z\u\newblkdash \C\G\y&\D\G\y\p&\D\G\y\d&\D\G\y\u }
		\!\!=\!\!\scriptsize\blkmat{cc:c:c:c}{\!.995 & .095 & .005 & .002 & .005 \\\! -.095 & .900 & .095 & .038& .095 \newblkdash 1 & 1 & 0 & 0 & -1 \newblkdash 1 & 0 & 0 & 0 & 0 \\ 0 & 1 & 0 & 0 & 0 \\ 0 & 0 & 0 & 0 & 1\newblkdash 1 & 0 & 0 & 0 & 0 \\ 0 & 1 & 0 & 0 & 0 }$}
	\end{align*}
	with the constraints $\z_{1,t}=\x_{1,t}\in[-.1,1]$, $\z_{2,t}=\x_{2,t}\in[-.25,.05]$, $\z_{3,t}=\u\in[-1,1]$, where $\z_{i,t}$ denotes the $i$-th component of the vector $\z_t$.
	The (loop-transformed) uncertainty $\Delta_\rho$ satisfies the $\mathcal H_\infty$ norm bound $\|\Delta_\rho\|_\infty\leq 0.2285$ for all $\rho \in [0.85,1]$.
	Hence, Assumption~\ref{ass:iqc} is satisfied for all $\rho \in [0.85,1)$ for the finite horizon IQC with terminal cost for dynamic uncertainties from~\cite{Scherer2022a,Schwenkel2025}.
	The peak-to-peak minimization algorithm from~\cite{Schwenkel2025} yields a controller $\K$ with peak-to-peak gain $\gamma = 0.949$ for $\rho=0.85$. 
	The estimator synthesis (\arxivonly{Theorem~\ref{thm:obs_syn}}\cdconly{\cite[Theorem 5]{Schwenkel2025b}}) yields $\L$ with $\gamobs=0.949$.
	Following Remark~\ref{rem:cfi}, we obtain $\gamma_1 = 0.06$, $\gamobs_1=0.029$, $\gamma_2 = 0.727$, $\gamobs_2 = 0.452$, $\gamma_3=0.757$, and $\gamobs_3=0.472$, which is a significant reduction of the conservatism in the tube.
	All SDPs are solved using YALMIP~\cite{Lofberg2004} with Mosek~\cite{mosek}.
	The MPC problem is solved using CasADi~\cite{Andersson2019} with ipopt~\cite{Waechter2005}.
	The closed-loop trajectories are shown in Figure~\ref{fig:state_space} for the case with and without the estimator $\L$ (without $\L$, we fix $\initopt_t=1$ for all $t$).
	The estimator-based initial state optimization provides smaller tubes and faster convergence. 
	A quantitative comparison between these two approaches and the schemes from~\cite{Schwenkel2020} and~\cite{Schwenkel2023c} is given in Table~\ref{tab:exmp}.
	For the schemes from~\cite{Schwenkel2020} and~\cite{Schwenkel2023c} we use the state feedback controller $\K$ from~\cite{Schwenkel2020}, as they are not able to handle the dynamic controller $\K$ in \eqref{eq:prestab}. 
	The proposed scheme with initial state optimization via $\initopt_t$ in~\eqref{eq:init_opt} outperforms all other schemes, thereby showing that the estimator-based approach is beneficial even in the case where $\x_t$ is completely known.
	The code is available online\footnote{\href{https://github.com/Schwenkel/mpc-iqc}{https://github.com/Schwenkel/mpc-iqc}}.
	\begin{figure}
		\resizebox{\linewidth}{!}{\input{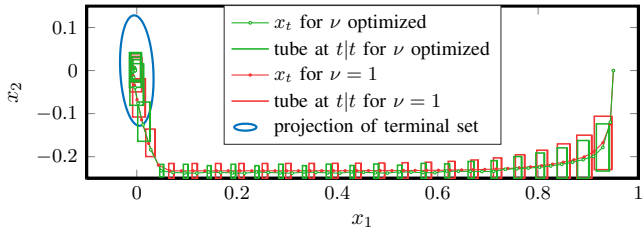}}\vspace{-0.3cm}
		\caption{Trajectories of the real system $\x_t$ together with the tube at $t|t$, which is for clarity only plotted at every second time instance. 
		}\label{fig:state_space}
	\end{figure}
	\begin{table}
		\centering
		\caption{Comparison of different IQC-based robust MPC schemes}\label{tab:exmp}
		\begin{tabular}{ccccc}
			\toprule
			MPC scheme & \cite{Schwenkel2020} & \cite{Schwenkel2023c}  & $\initopt=1$ & optimize $\initopt$ \\ \midrule
			closed-loop cost & $1676.4$ & $1510.3$ & $1441.1$ & $1414.6$ \\
			$\x_{1,t}$ at $t=43$ & $0.100$ & $0.081$ & $0.024$ & $0.012$ \\
			offline comp. time & 0.65s & 0.65s & 13.82s & 20.4s \\
			$\begin{matrix}
				\text{average online}\\\text{computation time}
			\end{matrix}$ & 112ms & 71ms & 95ms &129ms\\
			\bottomrule
		\end{tabular}
	\end{table}
\end{example}

\bibliographystyle{ieeetran}
\bibliography{../../../IST_Literatur/my_bib}

\renewcommand{\thesection}{A}
\section{Appendix}\label{sec:proofs}
The interconnection of $\G$ and $\K$ in~\eqref{eq:GKx_sys} is given by
\begin{flalign} \nonumber
	&\GK \triangleq \begin{pmatrix}
		\A\GK & \B\GK p & \B\GK\d & \B\GK\u \\
		\C\GK\q & \D\GK\q\p & \D\GK\q\d & \D\GK\q\u \\
		\C\GK\z & \D\GK\z\p & \D\GK\z\d & \D\GK\z\u \\
		\C\GK\y & \D\GK\y\p & \D\GK\y\d & 0
	\end{pmatrix} \ \  \text{with} \ \  
	\blkmat{c:c}{\A\GK          & \B\GK i \newblkdash \C\GK j & \D\GK j i}
	\triangleq \\ \label{eq:GK}
	& \blkmat{cc:c}{
	\A\G + \B\G\u \DK \C\G\y    & \B{\G}\u\CK             & \B{\G} i + \B{\G}\u \DK \D\G\y i \\
	\BB{\K} \C\G\y              & \A{\K}                  & \BB{\K} \D\G\y i \newblkdash
	\C\G j +\D\G j\u \DK \C\G\y & \D\G j\u \CK            & \D\G j i+\D\G j \u \DK \D\G\y i} \hspace{-0.5cm}&
\end{flalign}
for all $i\in\{\p,\d,\u\}$, $j\in\{\q,\z,\y\}$. Combining $\GK$ with the filter $\filt$ as in~\eqref{eq:aug_sys} yields
\begin{align}\label{eq:ABCDgpobs}
	\resizebox{\linewidth}{!}{\arraycolsep=2pt$\!\blkmat{c:c:c}{
	\!\A\gpobsrho    & \B\gpobsrho\p           & \B\gpobsrho\w \!        \newblkdash
	\!\C\gpobs\filty & \D\gpobs\filty\p        & \D\gpobs\filty\w\!\newblkdash
	\!\C\gpobs\z     & \D\gpobs\z\p            & \D\gpobs\z\w\!\newblkdash
	\!\C\gpobs\y     & \D\gpobs\y\p            & \D\gpobs\y\w
	\!} \!\!\triangleq \!\! \blkmat{cc:c:cc}{
	\!\A\filt        & \B\filt\q\C\GK\q        & \B\filt\p+\B\filt\q\D\GK\q\p                & \B\filt\q       & \B\filt\q\D\GK\q\d \!               \\
	\!0              & \A\GKrho                & \B\GKrho\p                                  & 0               & \B\GKrho\d         \!  \newblkdash
	\!\C\filt\filty  & \D\filt\filty\q \C\GK\q & \D\filt\filty\p + \D\filt\filty\q \D\GK\q\p & \D\filt\filty\q & \D\filt\filty\q \D\GK\q\d\! \newblkdash
	\!0              & \C\GK\z                 & \D\GK\z\p                                   & 0               & \D\GK\z\d \! \newblkdash
	\!0              & \C\GK\y                 & \D\GK\y\p                                   & 0               & \D\GK\y\d \! 
	}\!$}
\end{align}
where $\A\GKrho = \rho^{-1}\A\GK$, $\B\GKrho i=\rho^{-1}\B\GK i$ for $i\in\{\p,\d,\u\}$.
Augmenting $\gpobsrho$ with the estimator $\L$ as in~\eqref{eq:xi_sys} yields
\begin{align} \label{eq:ABCDobs}
	\resizebox{\linewidth}{!}{$\underbrace{\!\!\!\blkmat{c:c:c}{
\A\obsrho                                       & \B\obsrho\p     & \B\obsrho\w       \newblkdash
\C\obs\filty                                    & \D\obs\filty\p  & \D\obs\filty\w  \newblkdash
\C\obs\z                                        & \D\obs\z\p      & \D\obs\z\w
}\!\!\!}_{ \triangleq\obsrho}\,
\triangleq\!\! \blkmat{cc:c:c}{
\A\gpobsrho                                     & 0               & \B\gpobsrho\p             & \B\gpobsrho\w                  \\
\BB\Lrho\C\gpobs\y                              & \A\Lrho         & \BB\Lrho \D\gpobs\y\p     & \BB\Lrho \D\gpobs\y\w  \newblkdash
\C\gpobs\filty                                  & 0               & \D\gpobs\filty\p          & \D\gpobs\filty\w  \newblkdash
\begin{pmatrix}I&\!\!\!0\end{pmatrix}                 & 0               & 0                         & 0                              \\
\begin{pmatrix}0&\!\!\!\!I\end{pmatrix}\!-\!\DD\L\C\gpobs\y & -\CC\L          &\! -\DD\L\D\gpobs\y\p         & \!-\DD\L\D\gpobs\y\p
}$}
\end{align}
with $\A\Lrho \triangleq \rho^{-1} \A\L$, $\BB\Lrho \triangleq \rho^{-1} \BB\L$.
\begin{proof}[Proof of Theorem~\ref{thm:constraint_tightening_no_opt}]
	We only prove Theorem~\ref{thm:constraint_tightening} as the special case $\initopt_t=1$ for all $t\in\N$ recovers Theorem~\ref{thm:constraint_tightening_no_opt}. 
\end{proof}
\begin{proof}[Proof of Theorem~\ref{thm:constraint_tightening}]
	We multiply~\eqref{eq:stab_LMI1} and~\eqref{eq:ana_LMI2} from the right and left by $\begin{bmatrix}\chi_{j|t};\, \lt\p_{j|t};\, \lt\w_{j|t}\end{bmatrix}$ and its transpose, yielding 
	\begin{align}\label{eq:delta1}
		\delta_1(j)&\leq 0 \quad \text{and} \quad \delta_2(j)\leq 0 \quad \text{with}\\
		\delta_1(j) & \triangleq(\star)^{\!\top}\!\P\chi_{j+1|t}-\chi_{j|t}^\top \P\chi_{j|t}+\filty_{j|t}^\top\M\filty_{j|t} - \mu \|\lt \w_{j|t}\|^2_2 \nonumber \\
		\delta_2(j) & \triangleq -\chi_{j|t}^\top \P\chi_{j|t}+\chi_{j|t}^\top \ubar \X_1\chi_{j|t}+(\star)^{\!\top}\! \ubar \X_2 \chi_{j+1|t}\nonumber\\
		&\ \quad +\filty_{j|t}^\top\M_2\filty_{j|t}+{\textstyle\frac{\alpha}{\gamma}}\|\overline{\eznf}_{j|t}\|^2_2  -\beta \|\lt \w_{j|t}\|^2_2.\nonumber
	\end{align}
	Due to~\eqref{eq:cf} and $\|\lt \w_{j|t}\|_2^2\leq \rho^{-2j} (\dmax^2+\|\qnf_{j|t}\|_2^2)$ we have
	\begin{flalign}  \nonumber
		&\rho^{-2k}\cf_{k|t}- \rho^{-2t}\cf_{t|t} \\&\quad = \mu \sum_{j=t}^{k-1} \rho^{-2j} (\dmax^2+\|\qnf_{j|t}\|_2^2)\geq  \mu \sum_{j=t}^{k-1} \|\lt w_{j|t}\|^2_2.\hspace{-1cm}&\label{eq:cf_sum}
	\end{flalign}
	Moreover, decompose $\P=(\begin{smallmatrix} \P_{11} & \P_{12} \\ \P_{21} & \P_{22} \end{smallmatrix})$ with $\P_{11} \in\S^\nfilt$, $\P_{22}\in\S^{\nx+\nK}$. 
	Then, due to~\eqref{eq:stab_LMI2}, we have $\P_{22}\succ 0$ such that $g(\GKx)
	=\filtx^\top \P_{11}\filtx+2\filtx^\top \P_{12} \GKx + \GKx^\top \P_{22} \GKx$ is a convex function.
	Hence, for $\initopt_t\in[0,1]$, we have $g\big(\initopt_t \eGKxnf_{t|t-1} + (1-\initopt_t)\eGKxe_t\big )\leq \initopt_t g(\eGKxnf_{t|t-1})+(1-\initopt_t) g(\eGKxe_t)$.
	With $\filtx_{t|t}=\filtx_{t|t-1}=\filtx_t$ and $\overline{\eGKxe}_t\triangleq \rho^{-t}\eGKxe_t$, we conclude
	\begin{align}\nonumber
		&\symbscalar{\P \chi_{t|t}}  = [\star]^\top\!\P\begin{bmatrix}\filtx_t;\,  \initopt_t \overline{\eGKxnf}_{t|t-1} + (1-\initopt_t) \overline{\eGKxe}_t \end{bmatrix}\\
		& \leq \initopt_t \chi_{t|t-1}^\top \P \chi_{t|t-1}+(1-\initopt_t)\ubar\chi_t^\top\P\ubar\chi_t. \label{eq:Pchi_conv}
	\end{align}
	An intermediate result, which we prove by induction, is that
	\begin{align}\label{eq:Pchi_bound}
		\chi_{t|t}^\top \P \chi_{t|t} + \sum_{j=0}^{t-1} \filty_j^\top \M \filty_j \leq \rho^{-2t}\cf_{t|t}
	\end{align}
	holds for all $t\geq 0$.
	The base case $t=0$ follows from
	\begin{align*}
		\chi_{0|0}^\top \P \chi_{0|0} & \refeq{\scriptsize\eqref{eq:Pchi_conv}}\leq \initopt_0 \symbscalar{\P \chi_{0|-1}} + (1-\initopt_0) \symbscalar{\P \ubar\chi_0} \\&\refeq{\scriptsize(\ref{eq:eGKxe_bound})}\leq \initopt_0 \cf_{0|-1} + (1-\initopt_0) \ce_0 = \cf_{0|0}.
	\end{align*}
	For the induction step from $t$ to $t+1$ we make use of $\filty_{t|t}=\filty_{t}$ and follow similar arguments
	\begin{align*}
		&\symbscalar \P \chi_{t+1|t+1}
		 \refeq{\scriptsize\eqref{eq:Pchi_conv}}\leq \initopt_{t+1} \symbscalar\P \chi_{t+1|t} + (1-\initopt_{t+1}) \symbscalar\P \ubar\chi_{t+1}                                                                                                     \\
		 & \refeq{\scriptsize\eqref{eq:delta1}}\leq  \initopt_{t+1} \big(\symbscalar{\P \chi_{t|t}} - \filty_{t}^\top \M\filty_{t} +\mu\|\lt\w_{t|t}\|_2^2\big) \\
		  &\qquad + (1-\initopt_{t+1}) \symbscalar{\P \ubar\chi_{t+1}}                              \\
		 & \refeq{(\ref{eq:eGKxe_bound},\ref{eq:Pchi_bound})}\leq \initopt_{t+1} \rho^{-2(t+1)}\left( \rho^{2} \cf_{t|t} + \mu\rho^2  \big(\|\qnf_{t|t}\|_2^2 +\dmax^2\big) \right)   \\
		 &\qquad +(1-\initopt_{t+1})\rho^{-2(t+1)}\ce_{t+1} -\sigma(t) \\
		 & \refeq{\eqref{eq:cf}}= \rho^{-2(t+1)} \left(\initopt_{t+1} \cf_{t+1|t} + (1-\initopt_{t+1}) \ce_{t+1} \right)-\sigma(t) \\
		 &=\rho^{-2(t+1)} \cf_{t+1|t+1} -\sigma(t)
	\end{align*}
	where we used the shorthand notation $\sigma(t)\triangleq\sum_{j=0}^{t} \filty_j^\top \M\filty_j$.
	Hence, we have established~\eqref{eq:Pchi_bound}.
	Next, we make use of $\chi_{k|t}^\top \ubar \X_i \chi_{k|t} = \filtx_{k|t}^\top \X_i \filtx_{k|t}$, the telescoping sum argument
    \begin{align}
        \sum_{j=t}^{k-1}\big(\symbscalar \P\chi_{j+1|t}-\chi_{j|t}^\top \P\chi_{j|t}\big)\! = \chi_{k|t}^\top \P\chi_{k|t} - \chi_{t|t}^\top \P\chi_{t|t}, \nonumber \\[-0.4cm] \label{eq:tel_sum}
    \end{align} 
	and $\filty_{j|t}=\filty_j$ for $j\in\I_{[0,t-1]}$ to compute
	\begin{align*}
		&0  \geq \sum_{j=t}^{k-1} \delta_1(j) + \delta_2(k) = - \mu \sum_{j=t}^{k-1} \|\lt \w_{j|t}\|^2_2 +  {\textstyle\frac{\alpha}{\gamma}}\|\overline \eznf_{k|t} \|_2^2 \\
		 &\hspace{3.6cm}-\beta \|\lt \w_{k|t}\|^2_2 - \chi_{t|t}^\top\P\chi_{t|t} + 
		 \\&\resizebox{\linewidth}{!}{$\underbrace{\sum_{j=t}^{k-1} \filty_{j|t}^\top \M_1 \filty_{j|t}+ \chi_{k|t}^\top \ubar \X_1\chi_{k|t} + \sum_{j=t}^k\filty_{j|t}^\top \M_2 \filty_{j|t} + \symbscalar \ubar\X_2 \chi_{k+1|t}}_{\geq -\sum_{j=0}^{t-1} \filty_j^\top \M\filty_j \text{ due to \eqref{eq:iqc}, $\M=\M_1+\M_2$, and $(\M_1,\X_1),(\M_2,\X_2)\in\MXset$}}$.}
	\end{align*}
	After adding~\eqref{eq:cf_sum} and~\eqref{eq:Pchi_bound} to this inequality, we obtain
	\begin{align*}
		0 & \geq -\beta \|\lt \w_{k|t}\|^2_2 - \rho^{-2k} \cf_{k|t} +  {\textstyle\frac{\alpha}{\gamma}}\|\overline \eznf_{k|t} \|_2^2.
	\end{align*}
	With $\|\lt\w_{k|t}\|^2_2\leq \rho^{-2k}\|\qnf_{k|t}\|_2^2+\rho^{-2k}\dmax^2$, $\overline \eznf_{k|t}=\rho^{-k}\eznf_{k|t}$, and after multiplication by $\rho^{2k}\frac{\gamma}{\alpha}$, we obtain~\eqref{eq:constraint_tightening}.
\end{proof}

\begin{proof}[Proof of Theorem~\ref{thm:iss}]
	We show recursive feasibility by constructing the feasible candidate solution $\initopt^\mathrm{fc}_t\triangleq 1$ and
	\begin{align}\label{eq:unf_ws}
		\u^\mathrm{fc}_{k|t} = \begin{cases}
			\unf_{k|t-1} & \text{for } k \in \I_{[t,t+N-2]} \\
			0 & \text{for } k=t+N-1.
		\end{cases}
	\end{align}
	Due to $\initopt_t^\mathrm{fc}=1$, the resulting trajectories start at $\GKxnf^\mathrm{fc}_{t|t} = \GKxnf_{t|t-1}$ and  $\cf_{t|t}^\mathrm{fc}=\cf_{t|t-1}$.
	Thus and due to~\eqref{eq:unf_ws}, we have
	\begin{align}\nonumber
		\GKxnf^\mathrm{fc}_{k+1|t} &= \GKxnf_{k+1|t-1}, & 
		\cf^\mathrm{fc}_{k+1|t} &= \cf_{k+1|t-1}, \\
		\qnf^\mathrm{fc}_{k|t} &= \qnf_{k|t-1}, &
		\znf^\mathrm{fc}_{k|t} &= \znf_{k|t-1} \label{eq:znf_ws}
	\end{align}
	for $k \in \I_{[t,t+N-1]}$ and for $k=t+N$ we have
	\begin{align*}
		\GKxnf^\mathrm{fc}_{t+N|t} &= \A\GK \GKxnf_{t+N-1|t-1}, \\
		\cf^\mathrm{fc}_{t+N|t} &= \rho^2 \cf_{t+N-1|t-1}+\mu\rho^2\big(\|\qnf^\mathrm{fc}_{t+N-1|t}\|_2^2+\dmax^2\big), \\
		\qnf^\mathrm{fc}_{t+N-1|t} &= \C\GK\q \GKxnf_{t+N-1|t-1}, \quad 
		\znf^\mathrm{fc}_{t+N-1|t} = \C\GK\z\GKxnf_{t+N-1|t-1}.
	\end{align*}
	Assume feasibility at time $t-1$, then $\cf_{t+N-1|t-1}\leq \cf^\mathrm{f}$ and
	$\|\tswroot \GKxnf_{t-1+N|t-1} \|_2^2  \leq \GKxnf^\mathrm{f}$.
	Hence, we have
	\begin{align*}
		\|\tswroot \GKxnf^\mathrm{fc}_{t+N|t} \|_2^2&= \|\tswroot \A\GK \GKxnf_{t+N-1|t-1}\|_2^2 \\
		&\refeq{\eqref{eq:ts_inv1}}\leq \|\tswroot \GKxnf_{t+N-1|t-1}\|_2^2 \leq \GKxnf^\mathrm{f}.
	\end{align*}
	Due to
	\begin{align}\label{eq:ts_qbound}
		\|\qnf_{t+N-1|t}^\mathrm{fc} \|_2^2
		 \leq \max_{\|\tswroot\GKxnf\|_2^2 \leq \GKxnf^{\mathrm{f}}} \|\C\GK\q \GKxnf \|_2^2 = \|\C\GK\q \tswroot^{-1}\|_2^2 \GKxnf^\mathrm{f}
	\end{align}
	we also have
	\begin{align}\label{eq:ts_cbound}
		\cf^\mathrm{fc}_{t+N|t} &\leq \rho^2 \cf^\mathrm{f}+\mu\rho^2\big(\|\C\GK\q \tswroot^{-1}\|_2^2 \GKxnf^\mathrm{f}+\dmax^2\big) \refeq{\eqref{eq:ts_inv2}}\leq \cf^\mathrm{f}
	\end{align}
	and thus the candidate indeed satisfies~\eqref{eq:mpc:term}. 
	Finally, we have 
	\begin{align*}
		&\znf^\mathrm{fc}_{t+N-1|t} \leq \|\znf^\mathrm{fc}_{t+N-1|t}\|_\infty \leq  \max_{\|\tswroot \GKxnf\|_2^2 \leq \GKxnf^\mathrm{f}} \|\C\GK\z \GKxnf \|_\infty \\
		&\quad=\| \C\GK\z \tswroot^{-1}\|_{2\shortrightarrow\infty} \sqrt{\GKxnf^\mathrm{f}}\\
		&\quad\refeq{\eqref{eq:ts_feas}}\leq 1 - \sqrt{\frac{\gamma}{\alpha} \cf^\mathrm{f}+\frac{\gamma}{\alpha}\beta (\|\C\GK\q\tswroot^{-1}\|_2^2\GKxnf^\mathrm{f} + \dmax^2)}\\
		&\quad\refeqleft{(\ref{eq:ts_qbound})}\leq 1 - \sqrt{\frac{\gamma}{\alpha} \cf_{t+N-1|t}^\mathrm{fc}+\frac{\gamma}{\alpha}\beta(\|\qnf_{t+N-1|t}^\mathrm{fc} \|_2^2 + \dmax^2)}
	\end{align*}
	as $\cf_{t+N-1|t}^\mathrm{fc}=\cf_{t+N-1|t-1} \leq \cf^\mathrm{f}$.
	This implies that the candidate indeed satisfies~\eqref{eq:mpc:constraint} at $k=t+N-1$.
	For $k\in\I_{[t,t+N-2]}$ the constraint~\eqref{eq:mpc:constraint} holds due to~\eqref{eq:znf_ws} and feasibility at time $t-1$.
	Hence, the MPC scheme is feasible at time $t$ if it is feasible at time $t-1$, i.e., it is recursively feasible.
	Further, constraint satisfaction $\z_t\leq 1$ follows for all $t\geq 0$ as $\z_t = \znf_{t|t}+\eznf_{t|t}$,~\eqref{eq:constraint_tightening}, and~\eqref{eq:mpc:constraint} with $k=t$ hold for all $t\geq 0$.

	Finally, we show input-to-state stability.
	Due to Assumptions~\ref{ass:iqc},~\ref{ass:p2p_LMIs} we can apply~\cite[Theorem~3]{Schwenkel2025} to conclude that $\GK\star\Delta$ is $\ell_{2,\rho}$-stable.
	Hence, the dynamics~\eqref{eq:GKxnp_sys} and~\eqref{eq:eGKxnp_sys} of $\GKxnp$ and $\eGKxnp$ are $\ell_{2,\rho}$-stable and thus constants $\tilde \alpha_1$, $\tilde\alpha_2$ exist with 
	\begin{align*}
		\sum_{k=0}^{t}\rho^{-2k} \|\eGKxnp_k\|_2^2 &\leq \tilde\alpha_1 \|\exnf_0\|_2^2 +\tilde  \alpha_2\sum_{k=0}^{t-1}\rho^{-2k} (\|\d_k\|_2^2+\|\qnp_k\|_2^2) \\
		\sum_{k=0}^{t}\rho^{-2k} \|\GKxnp_k\|_2^2 &\leq \tilde\alpha_1 \|\xnf_0\|_2^2 + \tilde\alpha_2 \sum_{k=0}^{t-1}\rho^{-2k} \|\unf_k\|_2^2
	\end{align*}
	holds for all $t\geq 0$.
	As a consequence and due to $\xe_0 = \x_0-\exe_0$, $\|\GKx_k\|_2^2=\|\GKxnp_k+\eGKxnp_k\|_2^2\leq 2\|\GKxnp_k\|_2^2+2\|\eGKxnp_k\|_2^2$, and $\qnp_k = \C\GK\q \GKxnp_k + \D\GK\q\u \unf_k$, we infer that there exist constants $\alpha_2\geq 0$ and $\alpha_3\geq 0$ such that for all $t\geq 0$ we have
	\begin{align}\nonumber
		\|\GKx_t\|_2^2 &\le \sum_{k=0}^{t}\rho^{2t-2k} \|\GKx_k\|_2^2\le \rho^{2t} \alpha_2 (\|\x_0\|_2^2+\|\exe_0\|_2^2) \\[-0.2cm]&\qquad\qquad\  + \alpha_3 \sum_{k=0}^{t-1}\rho^{2t-2k} (\|\d_k\|_2^2+\|\unf_k\|_2^2)\label{eq:iss_half_way}
	\end{align}
	What remains to verify~\eqref{eq:iss} is a bound on $\unf_k$.
	To this end, we use optimality of $J(\GKxnf_{t+1|t+1},\unf_{\cdot|t+1})$ to conclude 
	\begin{align*}
		&J(\GKxnf_{t+1|t+1},\unf_{\cdot|t+1}) \leq J(\GKxnf_{t+1|t},\u^\mathrm{fc}_{\cdot|t+1})\\
		& = J(\GKxnf_{t|t},\unf_{\cdot|{t}}) - \left\| \begin{bmatrix} \GKxnf_{t|t};\, \unf_{t|t} \end{bmatrix} \right\|_\scw^2 \\&\qquad   \underbrace{-\|\GKxnf_{t+N|t}\|_\tcw^2+\left\| \begin{bmatrix} \GKxnf_{t+N|t}^\mathrm{fc} ;\, 0 \end{bmatrix} \right\|_\scw^2+\|\GKxnf_{t+1+N|t+1}^\mathrm{fc}\|_\tcw^2}_{\leq 0 \text{ due to \eqref{eq:ass_tcw}}}.
	\end{align*}
	Further, standard arguments (cf. proof of~\cite[Theorem~5]{Schwenkel2023c}) yield $J(\GKxnf_{t|t},\unf_{\cdot|t})\leq \alpha_6(\|\GKxnf_{t|t}\|_2)$ for some $\alpha_6\in\mathcal{K}_\infty$.
	As $\scw\succ0$, there is $\alpha_7\in\Kinf$ such that we have $\alpha_6(\|\GKxnf_{t|t}\|_2) \leq \alpha_7 \Big( \left\| \begin{bmatrix} \GKxnf_{t|t};\, \unf_{t|t} \end{bmatrix} \right\|_\scw \Big) $.
	Hence, 
	\begin{align*}
		J(\GKxnf_{t+1|t+1},\unf_{\cdot|t+1})\!-\!J(\GKxnf_{t|t},\unf_{\cdot|t})\!\leq\!- \alpha_7^{-1} \big( J(\GKxnf_{t|t},\unf_{\cdot|t})\big)
	\end{align*}
	and thus following standard Lyapunov arguments, there exists a function $\beta_2\in\KL$ such that $\|\GKxnf_{t|t}\|_2 \leq \beta_2(\|\GKxnf_{0|0}\|_2,t)$.
	Further, due to $\scw\succ0$ and $\tcw\succeq0$ there exists $\alpha_8\in\Kinf$ such that $\|\unf_{t|t}\|_2^2 \leq \alpha_8\big(J(\GKxnf_{t|t},\unf_{\cdot|t})\big) \leq\alpha_8\big(\alpha_6(\|\GKxnf_{t|t}\|_2)\big) \leq \alpha_8\big(\alpha_6(\beta_2(\|\GKxnf_{0|0}\|_2,t))\big)\triangleq \varepsilon_1(t)$, where we introduced $\varepsilon_1$ for the ease of notation. Thus, we have
	\begin{align} \label{eq:iss_mid2}
		\sum_{k=0}^{t-1} \rho^{2t-2k} \|\unf_k\|_2^2 &\leq \sum_{k=0}^{t-1} \rho^{2t-2k} \varepsilon_1(t) \triangleq \varepsilon_2(t).
	\end{align}
	By definition of $\varepsilon_2$, we have $\varepsilon_2(t+1) = \rho^2 \varepsilon_2(t) + \varepsilon_1(t)$. 
	Further, let $\rho<\rho_1<1$ and define $\lt \varepsilon_2(t) \triangleq \rho_1^{-2t} \varepsilon_2(t)$ and $\lt \varepsilon_1(t) \triangleq \rho_1^{-2t} \varepsilon_1(t)$, then $\lt \varepsilon_2(t+1) = (\frac{\rho}{\rho_1})^2 \lt \varepsilon_2(t) + \rho_1^{-2} \lt\varepsilon_1(t)$. 
	Next, we show $\lt\varepsilon_2(t) \leq \frac{1}{\rho_1^2-\rho^2} \max_{k\in\I_{[0,t-1]}} \lt\varepsilon_1(k)$ by induction.
	The base case $t=1$ follows from $\lt\varepsilon_2(1)=\rho_1^{-2} \lt\varepsilon_1(0) \leq \frac{1}{\rho^2_1-\rho^2} \lt\varepsilon_1(0)$.
	For the induction step $t\to t+1$ we have 
	\begin{align*}
		\lt \varepsilon_2(t+1) &= ({\textstyle\frac{\rho}{\rho_1}})^2 \lt \varepsilon_2(t) + \rho_1^{-2} \lt\varepsilon_1(t) \\
		&\leq {\textstyle(\frac{\rho}{\rho_1})^2 \frac{1}{\rho_1^2-\rho^2}} \max_{k\in\I_{[0,t-1]}} \lt\varepsilon_1(k)+ \rho_1^{-2} \lt\varepsilon_1(t)\\
		&\leq \underbrace{\left({\textstyle(\frac{\rho}{\rho_1})^2 \frac{1}{\rho_1^2-\rho^2}}+\rho_1^{-2}\right)}_{=\frac{1}{\rho_1^2-\rho^2}} \max_{k\in\I_{[0,t]}} \lt\varepsilon_1(k).
	\end{align*}
	Hence, we have $\varepsilon_2(t) \leq \frac{1}{\rho_1^2-\rho^2} \max_{k\in\I_{[0,t-1]}} \rho^{2t-2k}_1 \varepsilon_1(k)$ and plugging this bound into~\eqref{eq:iss_mid2}, yields 
	\begin{flalign}\nonumber
		&\sum_{k=0}^{t-1} \rho^{2t-2k} \|\unf_k\|_2^2 \leq \beta_3(\|\GKxnf_{0|0}\|_2,t)\\[-0.15cm]
		&\ \ \quad \triangleq {\textstyle\frac{1}{\rho_1^2-\rho^2}} \max_{k\in\I_{[0,t-1]}} \rho^{2t-2k}_1 \alpha_8\big(\alpha_6(\beta_2(\|\GKxnf_{0|0}\|_2,k))\big).\!\!\!\!\!&\label{eq:beta3}
	\end{flalign}
	Since $\beta_2(\cdot,k)\in\Kinf$ for all $k\in\N$, we have $\beta_3(\cdot,t)\in\Kinf$ for all $t\in\N$. 
	Further, $\beta_3(\|\GKxnf_{0|0}\|_2,t)$ is monotonically decreasing in $t$ and going to $0$ as $t\to\infty$.
	Hence, $\beta_3\in\KL$.
	Finally, note that $\|\GKxnf_{0|0}\|_2 = \|\xe_0\|_2\leq \|x_0\|_2+\|\exe_0\|_2$ and plug~\eqref{eq:beta3} into~\eqref{eq:iss_half_way} to obtain~\eqref{eq:iss} with $\alpha_1\in\Kinf$ defined by $\alpha_1(\|\d\|_\peak) \triangleq \alpha_3 \frac{1}{1-\rho^2} \|\d\|_\peak^2$ and $\beta_1\in\KL$ defined by $\beta_1\left(a,t\right) \triangleq \rho^{2t} \alpha_2 (a) + \alpha_3 \beta_3(a,t)$.
\end{proof}

\begin{proof}[Proof of Theorem~\ref{thm:obs_p2p}]
	Due to Assumption~\ref{ass:no_alg_loop} the estimation error $\eGKxe_t$ and hence $\zobs_t$ is independent of $\unf_{t}$.
	Hence, to bound $\zobs_t$, we take the $\unf_t$ that provides the smallest bound.
	In particular, we minimize
	\begin{align}\nonumber
		\min_{\unf_{t}} \|\qnp_t\|_2^2 &= \min_{\unf_{t}} \|\C\GK\q\GKxnp_t + \D\GK\q\u \unf_{t}\|_2^2 \\&= \min_{b_t} \|\C\GK\q\GKxnp_t + \mathscr{D} b_t\|_2^2 = \|\mathscr{C}\GKxnp_t\|_{2}^2, \label{eq:qstar}
	\end{align}
	which holds as the minimizer is $b_t^\star =-(\mathscr{D}^\top \mathscr{D})^{-1}\mathscr{D}^\top \C\GK\q \GKxnp_t$.
	Let ${\unf_t}^{\star}$ satisfy $\D\GK\q\u {\unf_t}^{\star}= \mathscr{D} b_t^\star$.
	Let $\p^\star_t$, $\q^\star_t$, $\qnp^\star_t$, $\xi^\star_{t+1}$, $\filty_t^\star$, and ${\zobs_t}^\star$ be the corresponding values that follow if we use $\unf_t = {\unf_t}^{\star}$ in~\eqref{eq:DGKx_sys},~\eqref{eq:GKxnp_sys},~\eqref{eq:xi_sys}.
	Due to Assumption~\ref{ass:no_alg_loop}, we know that $\D\GK\y\p \p^\star_t = \D\GK\y\p \p_t$ and thus $\zobs_t={\zobs_t}^\star$.
	Further, let ${\barwobs_t}^\star = \rho^{-t}[\qnp^\star_t;\, \d_t]$.
	Multiplying~\eqref{eq:obs_LMI2} from the right and left by $\begin{bmatrix}\xi_t;\, \lt\p^{\star}_t;\, {\barwobs_t}^{\star}\end{bmatrix}$ and its transpose yields
	\begin{align*}
		\xi_t^\top (\ubar \X_3-\Pobs) \xi_t+ \xi_{t+1}^{\star\top} \ubar \X_4 \xi_{t+1}^\star+\filty^{\star\top}_t\M_4\filty^\star_t +\symbscalar{\P \barzobs_t}&\\-\beta^\mathrm{o} \|{\barwobs_t}^\star\|^2_2&\leq 0.
	\end{align*}
	Since $(\M_4,\X_4)\in\MXset$ and due to the IQC~\eqref{eq:iqc} we have
	\begin{align*}
		\sum_{j=0}^{t-1} \filty_j^\top \M_4 \filty_j +\filty^{\star\top}_t\M_4\filty^\star_t + \xi_{t+1}^{\star\top} \ubar \X_4 \xi_{t+1}^\star \geq 0
	\end{align*}
	and hence $\delta_4(t)\leq 0$ with $
		\delta_4(t) \triangleq \xi_t^\top\!(\ubar \X_3\!-\!\Pobs) \xi_t\! -\!\! \sum_{j=0}^{t-1}\! \filty_j^\top \M_4\filty_j\!+\!\symbscalar{\!\P \barzobs_t}\!-\!\beta^\mathrm{o} \|{\barwobs_t}^\star\|^2_2$.
	Multiplying~\eqref{eq:obs_LMI1} from right and left by $\begin{bmatrix}\xi_j;\, \lt\p_j;\, {\barwobs_j}\end{bmatrix}$ and its transpose yields
	\begin{align*}
		\delta_3(j) & \triangleq\xi_{j+1}^\top \Pobs\xi_{j+1}-\xi_j^\top \Pobs\xi_j+\filty_j^\top\Mobs\filty_j - \muobs \|\barwobs_j\|^2_2 \leq 0.
	\end{align*}
	Due to~\eqref{eq:cp}, $\cp_0\geq \xi_0^\top \Pobs\xi_0$, and $\barwobs_{j}=\rho^{-j} \wobs_{j}$ we have
	\begin{align}\nonumber
		\cp_{t} & = \muobs \sum_{j=0}^{t-1} \rho^{2(t-j)} \underbrace{(\dmax^2+\|\qnp_j\|_2^2)}_{\geq \|\wobs_j\|_2^2} + \rho^{2t}\cp_0\\
		& \geq \rho^{2t} \xi_0^\top \Pobs \xi_0 +\rho^{2t}\muobs {\sum}_{j=0}^{t-1} \|\barwobs_j\|^2_2.\label{eq:cp_sum}
	\end{align}
	Finally, we use $\beta^\mathrm{o}\geq 0$, $(\M_3,\X_3)\in\MXset$,~\eqref{eq:qstar}, and the analogue to the telescoping sum argument from~\eqref{eq:tel_sum} to obtain
	\begin{align}\nonumber
		&0  \geq \sum_{j=0}^{t-1} \delta_3(j) + \delta_4(t) =   \symbscalar{\P \barzobs_t} +\sum_{j=0}^{t-1} \filty_j^\top \M \filty_j - \beta^\mathrm{o} \|{\barwobs_t}^\star\|^2_2         \\ \nonumber
		  & \qquad  \underbrace{- \xi_0^\top \Pobs \xi_0  - \muobs \sum_{j=0}^{t-1} \|\barwobs_j\|^2_2}_{\geq -\rho^{-2t} \cp_t \text{ due to \eqref{eq:cp_sum}}}
		+  \underbrace{\sum_{j=0}^{t-1} \filty_j^\top \M_3 \filty_j+ \xi_t^\top \ubar \X_3\xi_t}_{\geq 0 \text{ due to \eqref{eq:iqc}}}                                                             \\ \nonumber
		  & \geq -\rho^{-2t} \cp_{t}\! -\! \rho^{-2t} \beta^\mathrm{o} (\dmax^2\!+\!\|\mathscr{C} \GKxnp_t\|^2_2)\!+\!  \symbscalar{\!\P \barzobs_t}\!+\!\!\sum_{j=0}^{t-1} \filty_j^\top \M \filty_j.
	\end{align}
	Multiplying this inequality by $\rho^{2t}$ and using $\rho^{2t}\symbscalar{\P \barzobs_t}=\symbscalar{\P \zobs_t}$ proves the bound~\eqref{eq:eGKxe_bound}.
	Further, due to $\|\mathscr{C} \GKxnp_t\|^2_2\leq \|\qnp\|_\peak^2$ and~\eqref{eq:ce}, we have $\ce_t\leq \cp_t + \beta^\mathrm{o}(\|\qnp\|_\peak^2+\dmax^2)$.
	Moreover, if we have $\cp_0\leq \alpha\muobs(\|\qnp\|_\peak^2+\dmax^2)$, then we can show by induction that also $\cp_t\leq \alpha{\muobs}(\|\qnp\|_\peak^2+\dmax^2)$ holds.
	The base case is given by assumption, the induction step follows as $\cp_{t+1} \leq   \rho^{2}\cp_t + \rho^{2} \muobs (\|\qnp\|_\peak^2+\dmax^2) \leq \rho^2\left( \alpha + 1 \right)\muobs(\|\qnp\|_\peak^2+\dmax^2)=\alpha\muobs (\|\qnp\|_\peak^2+\dmax^2)$.
	Hence, we have $\ce_t\leq \cp_t + \beta^\mathrm{o}(\|\qnp\|_\peak^2+\dmax^2) \leq  \alpha\muobs (\|\qnp\|_\peak^2+\dmax^2) + \alpha(\gamobs-\muobs)(\|\qnp\|_\peak^2+\dmax^2) =\alpha\gamobs(\|\qnp\|_\peak^2+\dmax^2) $, which establishes~\eqref{eq:p2p_obs}.
\end{proof}

\arxivonly{
\begin{proof}[Proof of Theorem~\ref{thm:obs_syn}]
	Let us first show, how to construct a solution for~\eqref{eq:obs_LMI1_syn},~\eqref{eq:obs_LMI2_syn} from a solution of~\eqref{eq:obs_LMI1},~\eqref{eq:obs_LMI2} with $\Pobs_{22}\succ 0$.
	We assume without loss of generality that $\Pobs_{21}$ is invertible.
	If $\Pobs_{21}$ were singular, then we slightly perturb it to make it invertible while still satisfying the strict inequalities~\eqref{eq:obs_LMI1},~\eqref{eq:obs_LMI2}.
	Next, we apply a similar variable transformation as in~\cite{Scherer2008}.
	Define $\Psynobs_{1} \triangleq \Pobs_{11} - \Pobs_{12}(\Pobs_{22})^{-1} \Pobs_{21}$, $\Psynobs_{2}\triangleq\Pobs_{11}$, $\Zsynobs \triangleq -(\Pobs_{22})^{-1} \Pobs_{21}$, and $\Ysyntrafoobs = \begin{pmatrix} I & I \\ \Zsynobs & 0 \end{pmatrix}$.
	Then
	\begin{align} \label{eq:Psynobs}
		{\Ysyntrafoobs}^\top \Pobs \Ysyntrafoobs &= \begin{pmatrix} \Psynobs_{1}& \Psynobs_{1}\\ \Psynobs_{1} & \Psynobs_{2}\end{pmatrix},\\
		\symb{\Pobs \begin{pmatrix} I & 0\\ 0 & (\Pobs_{12})^{-1}\end{pmatrix}} &= \begin{pmatrix} \Psynobs_{2}&  I \\ I & (\Psynobs_{2}-\Psynobs_{1})^{-1}\end{pmatrix}.
	\end{align}
	Furthermore, define the new estimator variables as $\Ksynobs \triangleq \Pobs_{12}\A\L\Zsynobs$, $\Lsynobs \triangleq \Pobs_{12}\BB\L$, $\Msynobs\triangleq\CC\L\Zsynobs$, $\Nsynobs\triangleq\DD\L$.
	Given these new variables and~\eqref{eq:ABCDgpobs},~\eqref{eq:ABCDobs}, we observe
	\begin{align}\nonumber                          &
             \blkmat{cc:c}{
             I                                  & 0                                  & 0                                                           \\
             0                                  & \Pobs_{12}                         & 0 \newblkdash 0              & 0                          & I }
             \blkmat{c:cc}{
             \A\obsrho                          & \B\obsrho\p                        & \B\obsrho\w      \newblkdash
             \C\obs \filty                      & \D\obs \filty\p                    & \D\obs\filty\w                                              \\
             \C\obs \z                          & \D\obs\z\p                         & \D\obs\z\w
             }
             \blkmat{c:c}{
             \Ysyntrafoobs                      & 0 \newblkdash
             0                                  & I }                                                                                              \\ &
             \qquad =
             \blkmat{cc:cc}{
             \A\gpobsrho                        & \A\gpobsrho                        & \B\gpobsrho\p            & \B\gpobsrho\w                    \\
             \Asynobs_1                         & \Asynobs_2                         & \Bsynobs_1               & \Bsynobs_2         \newblkdash
             \C\gpobs\filty                     & \C\gpobs\filty                     & \D\gpobs\filty\p         & \D\gpobs\filty\w                 \\
             \begin{pmatrix}I & 0 \end{pmatrix} & \begin{pmatrix}I & 0 \end{pmatrix} & 0                        & 0                   \\
             \Csynobs_1                         & \Csynobs_2                         & \Dsynobs_{1}             & \Dsynobs_2
             }.\label{eq:synsysobs}
	\end{align}
	Using~\eqref{eq:Psynobs},~\eqref{eq:synsysobs} and multiplying~\eqref{eq:obs_LMI1} from right by $\diag(\Ysyntrafoobs,I)$ and from left by its transpose, we obtain
			$\mathcal{O}_1 + \symbscalar{(\Psynobs_{2}-\Psynobs_{1})^{-1} \begin{pmatrix}\Asynobs_1     & \Asynobs_2  & \Bsynobs_1       & \Bsynobs_2	\end{pmatrix}}\prec 0$.
	We know that $\Psynobs_{2}-\Psynobs_{1} = \Pobs_{12} (\Pobs_{22})^{-1} \Pobs_{21} \succ 0$ as $\Pobs_{22}\succ 0$ and $\Pobs_{21}$ is invertible.
	Hence, we can apply a Schur complement to arrive at~\eqref{eq:obs_LMI1_syn}.
	Moreover, we use\footnote{In a slight abuse of notation we use $\ubar\X_i = \diag( \X_i, 0_{n\times n})$ with the dimension $n=\nx+\nK$ and $n=\nx+\nK+\nL$ simultaneously.} ${\Ysyntrafoobs}^\top (\Pobs - \ubar\X_3) \Ysyntrafoobs = \begin{pmatrix}\Psynobs_1-\ubar\X_3 & \Psynobs_1-\ubar\X_3\\\Psynobs_1-\ubar\X_3 & \Psynobs_{2}\end{pmatrix}$ and multiply~\eqref{eq:obs_LMI2} from right by $\diag(\Ysyntrafoobs,I)$ and from left by its transpose to obtain

$		\mathcal{O}_2 + \symbscalar{\P_{22} \begin{pmatrix}\Csynobs_1 & \Csynobs_2 & \Dsynobs_{1} & \Dsynobs_2 \end{pmatrix}}\prec 0.$
	Since $\P_{22}\succ 0$ due to~\eqref{eq:stab_LMI2}, we can apply a Schur complement to arrive at~\eqref{eq:obs_LMI2_syn}.

	To show the other direction, we set $\Pobs_{21}=I$ and reverse the transformation steps to see that~\eqref{eq:Pobs_syn},~\eqref{eq:L_syn} is a solution of~\eqref{eq:obs_LMI1},~\eqref{eq:obs_LMI2}.
	Further, note that $\Pobs_{22} = (\Psynobs_2-\Psynobs_1)^{-1}\succ 0$ due to the lower right block of~\eqref{eq:obs_LMI1_syn}.
\end{proof}
} 
\end{document}